\def\eqref#1{equation~\ref{#1}}
\def\1{\bm{1}}
\DeclareMathAlphabet{\mathsfit}{\encodingdefault}{\sfdefault}{m}{sl}
\SetMathAlphabet{\mathsfit}{bold}{\encodingdefault}{\sfdefault}{bx}{n}
\theoremstyle{definition}
\theoremstyle{plain}
\newtheorem{lemma}{Lemma}
\newtheorem{theorem}{Theorem}
\newcommand{\ce}{\ensuremath{\mathrm{e}}}
\newcommand{\ci}{\ensuremath{\mathrm{i}}}
\newcommand{\cpi}{\ensuremath{\uppi}}
\begin{document}

\title{Critical Points in Quantum Generative Models}
\author{Eric R. Anschuetz\\
MIT Center for Theoretical Physics\\
Cambridge, MA 02139, USA\\
\href{mailto:eans@mit.edu}{\texttt{eans@mit.edu}}
}

\maketitle

\begin{abstract}
    One of the most important properties of neural networks is the clustering of local minima of the loss function near the global minimum, enabling efficient training. Though generative models implemented on quantum computers are known to be more expressive than their traditional counterparts, it has empirically been observed that these models experience a transition in the quality of their local minima. Namely, below some critical number of parameters, all local minima are far from the global minimum in function value; above this critical parameter count, all local minima are good approximators of the global minimum. Furthermore, for a certain class of quantum generative models, this transition has empirically been observed to occur at parameter counts exponentially large in the problem size, meaning practical training of these models is out of reach. Here, we give the first proof of this transition in trainability, specializing to this latter class of quantum generative models. We use techniques inspired by those used to study the loss landscapes of classical neural networks. We also verify that our analytic results hold experimentally even at modest model sizes.
\end{abstract}

\section{Introduction}

\subsection{Motivation}

One of the great successes of neural networks is the efficiency at which they are trained via gradient-based methods. Though training algorithms often involve the optimization of complicated, non-convex, high-dimensional functions, training via gradient descent in many contexts manages to converge to local minima that are good approximations of the global minimum in loss function value. This phenomenon has begun to be understood in the context of random matrix theory, particularly when applied to dense classifiers~\citep{pmlr-v38-choromanska15,chaudhari2017energy}.

Quantum generative models hold great promise in their ability to sample from probability distributions out of the reach of classical models~\citep{arute2019quantum,gao2021enhancing}. Though deep quantum generative models are believed to be difficult to train due to vanishing gradients~\citep{mcclean2018barren,cerezo2020costfunctiondependent,marrero2020entanglement}, the situation for shallow models is murkier. Some shallow quantum models have empirically shown great promise in being trainable~\citep{wiersema2020exploring,kim2020universal,kim2021quantum}, while others have empirically been shown to suffer from poor distributions of local minima~\citep{kiani2020learning,PhysRevA.103.032607}. Numerically, all of these models have been seen to experience a phase transition in trainability: below some critical depth, local minima are poor approximators of the global minimum. Above this critical depth, they are good approximators. This transition has been poorly understood analytically, as typically the distribution of local minima monotonically improves as the size of the model increases~\citep{pmlr-v38-choromanska15,chaudhari2017energy}.

\subsection{Our Contributions}\label{sec:summary}

In this work, we are the first to analytically show the presence of a computational phase transition in the training of a certain class of quantum generative models. To achieve this, we first show that this class of randomized quantum models is approximated in distribution by a Wishart random field on the hypertorus. We are then able to use techniques from Morse theory to exactly calculate the distribution of local minima (and general critical points) of this random field. Finally, we analyze this distribution in the limit of large model size, and analytically show the presence of this trainability phase transition. Roughly, we show that in this limit the expected density of local minima for a model with $p$ parameters and Hilbert space dimension $\sim m$ (exponential in the problem size) at loss value $0\leq E\leq\frac{1}{2}$ follows a generalization of the beta distribution~\citep{gordy}:
\begin{equation}
    \mathbb{E}\left[\operatorname{Crt}_0\left(E\right)\right]\sim\ce^{-mE} E^{m-\frac{p}{2}}\left(1-2E\right)^p.
\end{equation}
This distribution experiences a transition in behavior at $p=2m$: when $p<2m$, local minima are exponentially concentrated (i.e. with width $m^{-1}$) far away from the global minimum $E=0$, implying poor optimization performance in this regime. When $p\geq 2m$, this distribution is exponentially concentrated at $E=0$, implying good optimization performance. We also verify our results numerically, demonstrating this concentration of minima even at small problem sizes.

For the class of quantum generative models we consider, our results mirror the empirical results of~\citet{kiani2020learning,PhysRevA.103.032607} in that only unreasonably overparameterized quantum models have good local minima. Though these results are pessimistic, we emphasize here that our results only apply to a certain class of quantum generative models. We are also able to give a heuristic explanation based on our proof techniques as to how one may be able to construct models of a reasonable size that are still trainable at the expense of computational overhead in implementing the model, as seen empirically in~\citet{wiersema2020exploring,kim2020universal,kim2021quantum}.

\section{Preliminaries}\label{sec:prelim}

\subsection{Quantum Generative Models}

A quantum system of size $n$ is naturally represented by a \emph{quantum state}, which is a normalized vector $\ket{\psi}\in\mathbb{C}^{2^n}$. Here, we use the typical physics notation $\ket{\cdot}$ to denote a vector, instead of (say) $\bm{\psi}$, when we are describing a quantum state. A quantum state in $\mathbb{C}^{2^n}$ can be considered a generalization of probability distributions over $2^n$ states (i.e. over states described by $n$ bits), where the norm squared of entries of $\ket{\psi}$ give the distribution. The task in quantum generative modeling is to prepare a state $\ket{\psi}$ that is close under some metric to a given target state $\ket{\psi_{\text{target}}}$.

As operations that map probability distributions to probability distributions are naturally described by stochastic matrices, operations that map quantum states to quantum states are naturally described by unitary matrices; equivalently, they are described by the matrix exponentials of $2^n\times 2^n$ skew-Hermitian matrices. Multiple quantum operations can then be described by the sequential matrix multiplication of various matrix exponentials. It is then apparent that one can construct a quantum generative model by parameterizing these matrix exponentials, giving rise to the layered structure:
\begin{equation}
    \ket{\bm{\theta}}\equiv\prod\limits_{i=1}^q U_i\left(\bm{\theta}\right)\ket{\psi_0}\equiv\prod\limits_{i=1}^q\ce^{-\ci\theta_i Q_i}\ket{\psi_0},
    \label{eq:ansatz_intro}
\end{equation}
where here $q$ is the depth of the model and $Q_i$ are fixed Hermitian matrices. Typically, $q$ is polynomial in $n$, i.e. logarithmic in the dimension of the initial state vector $\ket{\psi_0}$. In most quantum generative models, various $\theta_i$ are completely dependent, e.g. $\theta_i=\theta_{i+5}$ for all $i$~\citep{Romero_2018}. For simplicity, we assume throughout this work that each independent parameter appears a constant number $r$ times in the model, and the total number of independent parameters is given by $p=q/r$. Though the linear nature of the model may be surprising, this model structure (for large enough depth $q$) is known to be a universal approximator of all (pure) quantum states, even for a fixed number of allowed $Q_i$~\citep{solovay1995,Kitaev_1997}.

For completely general (i.e. dense) $Q_i$ in \eqref{eq:ansatz_intro}, the model is not efficient to implement on a quantum computer. Thus, due to their efficiency in physical implementation, these $Q_i$ are typically taken to be members of the \emph{$n$-qubit Pauli group} $\mathbb{P}_n$, which is a generating subset of the additive group of all $n\times n$ Hermitian matrices. The Pauli group is also convenient to study analytically, as is the normalizer of the group (called the \emph{Clifford group}). The assumption that each $Q_i$ is a Pauli operator also allows us to use a single parameter $\theta_i$ for each layer of the model without loss of generality; in principle, more parameters can describe each layer by parameterizing sparse $Q_i$. However, as the Pauli group generates Hermitian matrices, this is a special case of the class of models we consider here (at the expense of larger $q$ and new dependencies among the $\theta_i$).

In quantum generative modeling, we are often not given the target state $\ket{\psi_{\text{target}}}$ directly. Instead, we are in many cases given a $2^n\times 2^n$ Hermitian matrix $H$ (called the \emph{problem Hamiltonian}) where $\ket{\psi_{\text{target}}}$ is the eigenvector associated with the smallest eigenvalue of $H$ (called the \emph{ground state}). In this formulation, optimization proceeds via the minimization of:
\begin{equation}
    F_{\text{VQA}}\left(\bm{\theta}\right)=\bra{\bm{\theta}}H\ket{\bm{\theta}}.
    \label{eq:vqa}
\end{equation}
In physics notation, $\bra{\bm{\theta}}$ is the conjugate transpose of the complex vector $\ket{\bm{\theta}}$. Assuming no degeneracies in the eigenspectrum of $H$ and a sufficiently expressive model, the minimizer $\ket{\bm{\theta^\ast}}$ of \eqref{eq:vqa} is the ground state of $H$, up to an overall phase due to the quadratic nature of \eqref{eq:vqa}. Assuming $H$ is efficiently expressible as the weighted sum of $\operatorname{O}\left(\operatorname{poly}\left(n\right)\right)$ Pauli matrices, \eqref{eq:vqa} and its gradients can be efficiently measured on a quantum computer~\citep{peruzzo2014variational,Romero_2018}. This and similar formulations of quantum generative modeling with \eqref{eq:vqa} as the loss function are called \emph{variational quantum algorithms} (VQAs)~\citep{peruzzo2014variational}. Generally, the goal of these algorithms is to find the state $\ket{\bm{\theta}}$ that optimizes \eqref{eq:vqa}, up to potentially some constant additive error in loss. Though there are other formulations of quantum generative modeling, we here focus on VQAs as they do not require coherent access to data $\ket{\psi_{\text{target}}}$, which is generally believed to be difficult~\citep{aaronson2015}.

Typically, models in VQAs come in one of two flavors: Hamiltonian agnostic models, and Hamiltonian informed models. Hamiltonian agnostic models are constructed such that the $Q_i$ present in the model definition are independent of $H$, and are generally more efficient to implement. This is most analogous to the case in classical generative modeling, where the model structure is usually independent from the specific choice of data distribution. We prove the existence of a computational phase transition for a class of Hamiltonian agnostic models in this work. We also give some heuristic and numerical evidence that a similar transition exists for Hamiltonian informed ansatzes, but that the trainable phase is more easily reached.

\subsection{Random Fields on Manifolds}

As in previous results on the loss landscapes of machine learning models~\citep{pmlr-v38-choromanska15,chaudhari2017energy}, we will map the distribution of a randomized class of quantum generative models to a random field on a manifold. This will then enable us to use standard mathematical techniques to study the distribution of critical points of the model.

Though they can be expressed in many ways, here we will be interested in random fields of the form:
\begin{equation}
    F_{\text{RF}}\left(\bm{\sigma}\right)\propto\sum\limits_{i_1,\ldots,i_r,i_1',\ldots,i_r'=1}^\varLambda\sigma_{i_1}\ldots\sigma_{i_r}J_{i_1,\ldots,i_r,i_1',\ldots,i_r'}\sigma_{i_1'}\ldots\sigma_{i_r'}.
\end{equation}
Here, $\bm{\sigma}\in M$ is some point on a manifold, and $\bm{J}$ is a random variable. In the context of most studies of machine learning loss landscapes, $M$ is typically the hypersphere and $\bm{J}$ a symmetric matrix of i.i.d. Gaussian random variables, i.e. a Gaussian orthogonal ensemble (GOE) matrix.

We will instead find that VQAs are naturally described as \emph{Wishart hypertoroidal random fields} (WHRFs). For these models, the manifold $M$ is a tensor product embedding of the hypertorus into exponentially large Euclidean space; that is, points on this embedding are described by the Kronecker product:
\begin{equation}
    \bm{\sigma}=\bigotimes\limits_i\begin{pmatrix}
    \cos\left(\theta_i\right)\\
    \sin\left(\theta_i\right)
    \end{pmatrix}
\end{equation}
for angles $-\cpi\leq\theta_i\leq\cpi$. Furthermore, in these models, $\bm{J}$ is drawn from a normalized complex Wishart distribution. The complex Wishart distribution is a natural multivariate generalization of the gamma distribution, and is given by the distribution of the square of a complex Gaussian random matrix. Specifically, for $\bm{X}\in\mathbb{C}^{n\times m}$ a matrix with i.i.d. complex Gaussian columns with covariance matrix $\bm{\varSigma}$, the matrix
\begin{equation}
    \bm{W}=\frac{1}{m}\bm{X}\cdot\bm{X}^\dagger
\end{equation}
is normalized complex Wishart distributed with scale matrix $\bm{\varSigma}$ and $m$ degrees of freedom. We will find that the degrees of freedom $m$ will greatly affect the distribution of local minima of the WHRF, and thus also of the class of quantum generative models that we consider.

\section{Quantum Generative Models as Random Fields}\label{sec:qgms_as_rfs}

We first show that a certain randomized class of Hamiltonian agnostic VQAs can be expressed as WHRFs. This will allow us to more easily study the critical points of the model using techniques from random matrix theory. Though we leave the full statement and proof for Appendix~\ref{sec:vqa_to_whrf}, we give an informal statement and discussion here.
\begin{theorem}[VQAs as WHRFs, informal]
    Consider the class of models
    \begin{equation}
        \ket{\bm{\theta}}\equiv\prod\limits_{i=1}^q U_i\left(\bm{\theta}\right)\ket{\psi_0}\equiv\prod\limits_{i=1}^q\ce^{-\ci\theta_i Q_i}\ket{\psi_0},
    \end{equation}
    where each $Q_i$ is drawn uniformly from the Pauli group $\mathbb{P}_n$ and $\ket{\psi_0}$ is the first column of a uniformly random member of the Clifford group. Let $p$ be the number of distinct $\theta_i$, and let $r=q/p$. Under reasonable assumptions on the eigenvalues of $H$ (with minimum eigenvalue $\lambda_1$ and mean eigenvalue $\overline{\lambda}$), the random loss function
    \begin{equation}
        F_H\left(\bm{\theta}\right)=\frac{\bra{\bm{\theta}}H\ket{\bm{\theta}}-\lambda_1}{\overline{\lambda}-\lambda_1}
        \label{eq:ham_loss_intro}
    \end{equation}
    converges in distribution to the random field
    \begin{equation}
        F_{\text{WHRF}}\left(\bm{\theta}\right)=\sum\limits_{i_1,\ldots,i_r,i_1',\ldots,i_r'=1}^{2^p}w_{i_1}\ldots w_{i_r}J_{i_1,\ldots,i_r,i_1',\ldots,i_r'}w_{i_1'}\ldots w_{i_r'},
        \label{eq:whrf_loss_intro}
    \end{equation}
    where $\bm{w}$ are points on the hypertorus $\left(S^1\right)^{\times p}$ parameterized by $\bm{\theta}$ and $\bm{J}$ is a complex Wishart random matrix normalized by its number of degrees of freedom $m$.
    \label{thm:vqa_as_whrfs_inf}
\end{theorem}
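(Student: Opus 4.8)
The plan is to carry out the three-step program sketched in Section~\ref{sec:intro_mapping_vqa_to_whrf}: expand the objective into a quadratic form in trigonometric monomials, recognize the coefficient matrix as a Gram matrix of overlaps with (random) stabilizer states, and then show that this Gram matrix converges in distribution to a normalized complex Wishart matrix.

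First I would expand the objective. Because each $Q_i$ is a Pauli string, $Q_i^2=I$ and hence $\ce^{-\ci\theta_i Q_i}=\cos\left(\theta_i\right)I-\ci\sin\left(\theta_i\right)Q_i$. Organizing the $q=rp$ rotations into $r$ blocks, each containing the $p$ distinct angles, and expanding each block produces $2^p$ terms indexed by a choice of $\cos$ or $\sin$ for every angle; a full ``path'' is then a tuple $\left(i_1,\ldots,i_r\right)$ with each $i_l\in\left\{1,\ldots,2^p\right\}$, carrying the trigonometric weight $w_{i_1}\ldots w_{i_r}$---these products are precisely the hypertorus coordinates $\bm{w}$---and terminating in a stabilizer state $\ket{\phi_{i_1,\ldots,i_r}}$ obtained by applying the corresponding product of Paulis to $\ket{\psi_0}$. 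Writing $\ket{\bm{\theta}}=\sum_{\bm{i}}w_{i_1}\ldots w_{i_r}\ket{\phi_{\bm{i}}}$ and substituting into Eq.~\eqref{eq:ham_loss_intro} immediately yields the quadratic form of Eq.~\eqref{eq:whrf_loss_intro} with $J_{\bm{i},\bm{i}'}\propto\bra{\phi_{\bm{i}}}\left(H-\lambda_1\right)\ket{\phi_{\bm{i}'}}$.

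Next I would exhibit the Wishart structure. Since $\lambda_1$ is the ground-state energy, $H-\lambda_1=\sum_k\mu_k\ketbra{\lambda_k}{\lambda_k}$ is positive semidefinite with $\mu_k=\lambda_k-\lambda_1\geq0$. Setting $v_{\bm{i},k}=\sqrt{\mu_k}\braket{\phi_{\bm{i}}}{\lambda_k}$ gives $J=VV^\dagger$, a manifestly Wishart-type Gram matrix, and the denominator $2^{-n}\left\lVert H-\lambda_1\right\rVert_\ast=2^{-n}\sum_k\mu_k$ is exactly the rescaling of $J$ by its number of degrees of freedom. It then remains to argue that the entries $v_{\bm{i},k}$ behave like i.i.d.\ complex Gaussians. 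I would do this by a method-of-moments argument: because $\ket{\psi_0}$ is a uniformly random stabilizer state and the $Q_i$ are uniformly random Paulis, the family $\left\{\ket{\phi_{\bm{i}}}\right\}$ consists of random stabilizer states whose low-order joint moments match those of Haar-random states (stabilizer states form a state $3$-design), so the overlaps $\braket{\phi_{\bm{i}}}{\lambda_k}$ are, to the order needed, jointly uncorrelated with Gaussian marginals of variance $2^{-n}$. Matching the moments of $J$ against those of a complex Wishart matrix then gives convergence in distribution.

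The main obstacle is precisely this last step, and it is where the \emph{reasonable assumptions on the eigenvalues of $H$} enter. The overlaps $\braket{\phi_{\bm{i}}}{\lambda_k}$ are not genuinely independent---all the $\ket{\phi_{\bm{i}}}$ descend from the single state $\ket{\psi_0}$ through operators that are deterministic once the Paulis are fixed---so I expect the delicate part to be bounding the residual correlations and showing they vanish in the relevant limit. Moreover, the raw expansion does not produce a single Wishart term but a sum of (shifted) Wishart contributions coming from the different eigen-sectors of $H$; the spectral assumption is needed to guarantee enough effective degrees of freedom (roughly, that the participation ratio $\left(\sum_k\mu_k\right)^2/\sum_k\mu_k^2$ is large) so that a central-limit and concentration argument renders the overlaps asymptotically Gaussian and the sum of Wishart contributions collapses onto a single effective complex Wishart matrix. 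Quantifying the error of this collapse as a function of the spectral spread of $H$ is the technical heart of the proof.
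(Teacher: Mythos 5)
Your overall architecture is the same as the paper's: the paper likewise expands the objective as a Feynman path integral over $2^q$ stabilizer-state paths weighted by the trigonometric monomials $w_{i_1}\ldots w_{i_r}$, replaces the stabilizer matrix elements by complex Gaussians via a moment-matching argument (Theorem~\ref{thm:xhx}, the ``XHX'' model), and then collapses the resulting weighted sum of rank-one Wishart terms into a single effective Wishart matrix by a multivariate Welch--Satterthwaite approximation (Theorem~\ref{thm:xhx_as_whrfs}). So your first and third steps are exactly the paper's two theorems, and your identification of the spectral ``participation ratio'' as the source of the effective degrees of freedom is the right instinct.

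The gap is in your second step, precisely where you locate the ``technical heart,'' and it would produce a quantitatively wrong answer as planned. You factor $J\propto VV^\dagger$ through the positive semidefinite operator $H-\lambda_1$ and then treat the overlaps $\braket{\phi_{\bm{i}}}{\lambda_k}$ as asymptotically independent Gaussians, expecting the residual correlations to vanish in the relevant limit. They do not vanish at the relevant order. Each path state is exactly normalized, $\sum_k\lvert\braket{\phi_{\bm{i}}}{\lambda_k}\rvert^2=1$, so in the true objective the identity component of $H-\lambda_1$ contributes the deterministic constant $\overline{\lambda}-\lambda_1$ (with $\overline{\lambda}$ the mean eigenvalue): one has $\bra{\bm{\theta}}\left(H-\lambda_1\right)\ket{\bm{\theta}}=\bra{\bm{\theta}}\left(H-\overline{\lambda}\right)\ket{\bm{\theta}}+\left(\overline{\lambda}-\lambda_1\right)$, and the second term has zero fluctuation. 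Gaussianizing the matrix elements of the full $H-\lambda_1$, as you propose, turns this constant into $\left(\overline{\lambda}-\lambda_1\right)\sum_k\lvert g_k\rvert^2$, which fluctuates with relative variance $2^{-n}$---the same order as the true signal variance $1/m$, since $m=\operatorname{\Theta}\left(2^n\right)$. Concretely, with $\mu_k=\lambda_k-\lambda_1$, your construction yields effective degrees of freedom $\left(\sum_k\mu_k\right)^2/\sum_k\mu_k^2$, whereas the correct value (Eq.~\eqref{eq:m_def}) is $m=\left\lVert H-\lambda_1\right\rVert_\ast^2/\left\lVert H-\overline{\lambda}\right\rVert_{\text{F}}^2$; these satisfy $\left(\sum_k\mu_k\right)^{-2}\sum_k\mu_k^2=m^{-1}+2^{-n}$ and hence differ by a constant factor (a factor of $2$ when $m=2^n$). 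Since $m$ sets the location of the underparameterized/overparameterized transition at $p=2m$ and the center and width of the local-minima distribution, this is not a harmless discrepancy. The fix is the paper's maneuver: split $H-\lambda_1=\left(H-\overline{\lambda}\right)+\left(\overline{\lambda}-\lambda_1\right)$, keep the identity contribution exact as the ``$+1$'' term, and Gaussianize only the traceless part---tracelessness is also what the paper's second-moment cancellations and higher-moment bounds (under the assumption Eq.~\eqref{eq:min_frust_assump}) rely on.
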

Note that for convenience, we have shifted and scaled the typical VQA loss such that it is always greater than zero and independent from overall scalings of the problem Hamiltonian.

In the course of this mapping, we find that the \emph{degrees of freedom} of the Wishart matrix $\bm{J}$ (formally a real number) is given by the ratio:
\begin{equation}
    m\equiv\frac{\left\lVert H-\lambda_1\right\rVert_\ast^2}{\left\lVert H-\overline{\lambda}\right\rVert_{\text{F}}^2}.
    \label{eq:m_def_intro}
\end{equation}
Here, $\left\lVert\cdot\right\rVert_\ast$ denotes the nuclear norm, and $\left\lVert\cdot\right\rVert_{\text{F}}$ the Frobenius norm. Generally, this ratio is exponential in $n$, particularly when modeling the class of ground states typically represented by VQAs~\citep{peruzzo2014variational,farhi2014quantum,Romero_2018}. Though we are unable to prove Theorem~\ref{thm:vqa_as_whrfs_inf} for Hamiltonian informed models, there are heuristic reasons to believe that they are described by a similar random field with $m=\operatorname{O}\left(\operatorname{poly}\left(n\right)\right)$, as opposed to \eqref{eq:m_def_intro} (see Appendix~\ref{sec:validity_of_mapping_disc} and empirical evidence in Sec.~\ref{sec:numerics}). We will later find that a number of independent model parameters $p$ that is twice the degrees of freedom $m$ of the matrix $\bm{J}$ marks the transition from the underparameterized to the overparameterized regime of $F_{\text{WHRF}}$, where the quality of local minima improves.

The general idea for showing this equivalence relies on the \emph{path integral expansion} of the VQA loss function. Effectively, this is just a Taylor expansion of the unitary matrices composing the model, which is exact even at a finite number of terms. One can then show that terms in this expansion can be assumed independent with negligible error in distribution, and then show that the resulting random process is asymptotically a WHRF. The reasonable assumptions on the eigenvalues of $H$ are essentially just a requirement that the eigenvalues of $H$ are not ``unnaturally'' spread out; for the quantum states VQAs typically model, this is never the case. We give a full description of these requirements with the full proof in Appendix~\ref{sec:vqa_to_whrf}.

\section{The Loss Landscape of Wishart Hypertoroidal Random Fields}

\subsection{Exact Results}

Having shown that VQAs can be described as WHRFs, we now focus discussion entirely on WHRFs. Our strategy for showing the distribution of critical points of this random field will be similar to that in~\citet{doi:10.1002/cpa.21422}, where similar results were shown for Gaussian spherical random fields. Namely, we will lean heavily on the \emph{Kac--Rice formula}, which gives the expected number of critical points of a certain index at a given range of function values for random fields on manifolds. We give an informal description of the Kac--Rice formula here, with the formal version given in Appendix~\ref{sec:kac_rice_assumptions}.
\begin{lemma}[Kac--Rice formula~\citep{adler2009random}, informal]
    % Note that the covariant gradient being zero implies that $\operatorname{Hess}\left(f\right)_{ij}=\partial_i\partial_j f$.
    Let $M$ be a compact, oriented manifold. Assume a random field $F\left(\bm{\sigma}\right)$ on $M$ is sufficiently nice. Then, the number of critical points of index at most $k$ with $F\left(\bm{\sigma}\right)\in B$ for an open set $B\subset\mathbb{R}$ is
    \begin{equation}
        \begin{aligned}
            \mathbb{E}\left[\operatorname{Crt}_k\left(B\right)\right]=\int\limits_M&\mathbb{E}\left[\left\lvert\det\left(\grad^2{F}\left(\bm{\sigma}\right)\right)\right\rvert\bm{1}\left\{F\left(\bm{\sigma}\right)\in B\right\}\bm{1}\left\{\iota\left(\grad^2{F}\left(\bm{\sigma}\right)\right)\leq k\right\}\mid\grad{F}\left(\bm{\sigma}\right)=0\right]\\
            &\times p_{\bm{\sigma}}\left(\grad{F}\left(\bm{\sigma}\right)=0\right)\dd{\bm{\sigma}},
        \end{aligned}
    \end{equation}
    where $\grad{\cdot}$ is the covariant gradient, $\iota\left(\cdot\right)$ is the index of $\cdot$, $p_{\bm{\sigma}}$ is the probability density of $\grad{F\left(\bm{\sigma}\right)}$ at $\bm{\sigma}$, and $\dd{\bm{\sigma}}$ is the volume element on $M$.
    \label{lemma:kac_rice_formula}
\end{lemma}

From Lemma~\ref{lemma:kac_rice_formula}, we see that when the joint distribution of $\grad^2{F}$, $\grad{F}$, and $F$ is known, then the expected number of critical points with function values in an open set $B$ can be calculated. Perhaps surprisingly, as in the Gaussian case, the joint distribution of these derivatives for WHRFs is fairly simple. Once again leaving the full proof for Appendix~\ref{sec:whrfs}, we show the distribution of the Hessian conditioned to be at a critical point of function value $x$ can be described by the shifted sum of a Wishart matrix with an independent GOE matrix. Similarly, the distribution of the gradient conditioned on the function value being $x$ is given by a normal distribution.
\begin{lemma}[Hessian and gradient distributions, informal]
    The scaled Hessian $m\partial_i\partial_j F_{\text{WHRF}}\left(\bm{w}\right)$ conditioned on $F_{\text{WHRF}}\left(\bm{w}\right)=x$ and $\partial_k F_{\text{WHRF}}\left(\bm{w}\right)=0$ is distributed as
    \begin{equation}
        m\tilde{C}_{ij}\left(x\right)=-2rmx\delta_{ij}+rW_{ij}+r\sqrt{2mx}N_{ij},
    \end{equation}
    where $\bm{W}$ is Wishart distributed with $2m$ degrees of freedom, $\bm{N}$ GOE distributed, and they are independent. Furthermore, the scaled gradient $m\partial_k F_{\text{WHRF}}\left(\bm{w}\right)$ conditioned on $F_{\text{WHRF}}\left(\bm{w}\right)=x$ is distributed as
    \begin{equation}
        m\tilde{G}_k\left(x\right)=\sqrt{2mrx}N_k,
    \end{equation}
    where $N_k$ are i.i.d. standard normal distributions independent from all $W_{ij}$ and $N_{ij}$.
    \label{lemma:hess_dist}
\end{lemma}

With all of the pieces in place, we are able to explicitly calculate the expected distribution of local minima in WHRFs via the Kac--Rice formula (with full calculations left for Appendix~\ref{sec:whrfs}). In Sec.~\ref{sec:numerics} we find empirical evidence that these results hold not only in expectation, but in distribution; we leave further analytic investigation of this to future work.
\begin{theorem}[Distribution of critical points in WHRFs, informal]
    Let
    \begin{equation}
        \mu_{\bm{C}\left(x\right)}=\frac{1}{p}\sum\limits_{i=1}^p\delta\left(\lambda_i^{\bm{C}\left(x\right)}\right)
    \end{equation}
    be the empirical spectral measure of the random matrix
    \begin{equation}
        \bm{C}\left(x\right)=\frac{r}{m}\left(\bm{W}+\sqrt{2mx}\bm{N}\right),
    \end{equation}
    where $\bm{W}$ is Wishart distributed with $2m$ degrees of freedom, $\bm{N}$ GOE distributed, and they are independent. $\lambda_i^{\bm{C}}\left(x\right)$ is the $i$th smallest eigenvalue of $\bm{C}\left(x\right)$. Then, the distribution of the expected number of critical points of index $k$ of a WHRF at a function value $E>0$ is given by
    \begin{equation}
        \begin{aligned}
            &\mathbb{E}\left[\operatorname{Crt}_k\left(E\right)\right]\\
            &=\left(\frac{\cpi}{r}\right)^{\frac{p}{2}}\Gamma\left(m\right)^{-1}m^{\left(1+\gamma\right)m}\mathbb{E}_{\bm{C}\left(E\right)}\left[\ce^{p\int\ln\left(\lvert\lambda-2rE\rvert\right)\dd{\mu_{\bm{C}\left(E\right)}}}\bm{1}\left\{\lambda_{k+1}^{\bm{C}\left(E\right)}\geq 2rE\right\}\right]E^{\left(1-\gamma\right)m-1}\ce^{-mE},
        \end{aligned}
    \end{equation}
    where
    \begin{equation}
        \gamma=\frac{p}{2m}.
    \end{equation}
    \label{thm:exact_crt_dist}
\end{theorem}

We call the parameter $\gamma$ the \emph{overparameterization factor}. It describes the ratio between the number of parameters of the model $p$ and twice the degrees of freedom $m$ of the model. We will later find that $\gamma$ governs the phase transition between an \emph{underparameterized phase} of the model---where local minima are far from the global minimum---and an \emph{overparameterized phase}---where local minima are good approximators of the global minimum.

\subsection{Asymptotic Results as \texorpdfstring{$p\to\infty$}{p→∞}}

Though Theorem~\ref{thm:exact_crt_dist} gives the exact distribution of critical points, it is difficult to use in practice. This difficulty comes from the expectation over eigenvalues of the sum of independent Wishart and Gaussian matrices. Surprisingly, however, the eigenvalues of both Wishart and Gaussian orthogonal matrices converge to \emph{fixed} distributions. Essentially, asymptotically in the size of the matrix, the eigenvalue distribution of all normalized Wishart matrices are the same (given by the \emph{Marchenko--Pastur distribution}) and the eigenvalue distribution of all Gaussian orthogonal matrices are the same (given by the \emph{Wigner semicircle distribution}).

Luckily, we can characterize the asymptotic distribution of eigenvalues of the sums of these matrices using the tools of \emph{free probability theory}. Roughly, free probability theory is the probability theory of noncommutative random variables (e.g. random matrices). As the distribution of the sum of two random variables in commutative probability theory can be described by the convolution of the distributions of the two independent random variables, so can the \emph{free convolution} of the distributions of two \emph{freely independent} noncommutative random variables. Using the asymptotic free independence of Wishart and Gaussian orthogonal random variables, we are able to show that asymptotically the eigenvalue distribution of their sum weakly converges to the free convolution of a Marchenko--Pastur distribution with a semicircle distribution.

However, weak convergence is not enough; due to the exponential factor in the expectation in Theorem~\ref{thm:exact_crt_dist}, any large deviations from the asymptotic convergence---even if they occur with exponentially vanishing probability---can potentially cause large deviations from the naive application of free probability theory. Thus, our results rely on using large deviations theory to show that to (logarithmic) leading order these deviations do not contribute to the final result. This is due to the contribution to the expectation from the deviations being dominated by what is predicted by free probability theory. These results can be summarized via the following theorem (proved in Appendix~\ref{sec:log_asymptotics}):
\begin{theorem}[Logarithmic asymptotics of the local minima distribution, informal]
    Let $\dd{\mu_E^\ast}$ be the free convolution of a scaled Marchenko--Pastur and scaled Wigner semicircle distribution, with $\lambda_{E,1}^\ast$ the infimum of its support. Let $p,m\gg 1$ with $\frac{p}{2m}=\gamma=\operatorname{O}\left(1\right)$. Then, the expected distribution of local minima of a WHRF at a fixed function value $E>0$ is given by
    \begin{equation}
        \begin{aligned}
            \frac{1}{p}\ln\left(\mathbb{E}\left[\operatorname{Crt}_0\left(E\right)\right]\right)&=\frac{1}{2}\ln\left(\frac{\cpi q}{2\gamma}\right)+\frac{1}{2\gamma}\left(1-E\right)+\frac{1}{2}\left(\gamma^{-1}-1\right)\ln\left(E\right)\\
            &+\int\ln\left(\left\lvert\frac{\lambda}{r}-2E\right\rvert\bm{1}\left\{\frac{\lambda_{E,1}^\ast}{r}\geq 2E\right\}\right)\dd{\mu_E^\ast}+\operatorname{o}\left(1\right).
            \label{eq:asymptotic_loc_min_dist}
        \end{aligned}
    \end{equation}
    \label{thm:asymptotic_loc_min_dist}
\end{theorem}

Note that, though we only prove the asymptotic distribution of local minima in Theorem~\ref{thm:asymptotic_loc_min_dist}, we expect similar theorems to also hold for critical points of constant index $k$ (taking $\lambda_{E,1}^\ast\mapsto\lambda_{E,k}^\ast$ in the integrand). The only difference in the derivation is the exact form of the large deviations of the $k$th smallest eigenvalue of $\bm{C}\left(x\right)$. This is similar to the case in Gaussian hyperspherical random fields, which are often used to model neural network loss functions~\citep{doi:10.1002/cpa.21422,pmlr-v38-choromanska15,chaudhari2017energy}.

\subsection{Discussion of the Critical Point Distribution}\label{sec:discussion_crit_pt_dist}

Let us now discuss the implications of Theorem~\ref{thm:asymptotic_loc_min_dist}. Note first that the rescaled logarithmic number of critical points diverges when $q\to\infty$. Following the derivation closely, one finds that this is due to an exponentially suppressed (when $m$ is exponential in $n$) gradient. We believe that this is a manifestation of the ``barren plateau'' phenomenon, where for many deep VQA models it can be shown that there is an exponentially vanishing variance of the gradient over the loss landscape~\citep{mcclean2018barren,cerezo2020costfunctiondependent,marrero2020entanglement}. This interpretation suggests that these barren plateau regions are filled with many small ``bumps'' that are exponentially shallow. Furthermore, note that this class of random fields exhibits banded behavior in the eigenvalues. That is, local minima only exist in the band $0\leq E\leq E_0$, where $E_0$ is the solution to
\begin{equation}
    \lambda_{E_0,1}^\ast=2rE_0.
    \label{eq:e_0_def}
\end{equation}
This banded behavior is similar to that in the Gaussian spherical case. We will see, however, that this does not give necessarily good guarantees on the distribution of local minima. This is due to $E_0$ being generally far from $0$ when $\gamma<1$ as $p,m\to\infty$. To illustrate this, we focus now on two cases: $p\geq 2m$ (the \emph{overparameterized regime}) and $p\ll m$ (the \emph{underparameterized regime}).

\subsubsection{\texorpdfstring{$p\geq 2m$}{p≥2m}}\label{sec:big_gamma_disc}

First, let us consider when $p\geq 2m$, i.e. $\gamma\geq 1$. In this limit, the Wishart term of $\bm{C}$ is low-rank, and $\mu_E^\ast$ has support on eigenvalues $\leq 0$ for all $E\geq 0$. Therefore, the condition $\lambda_{E,1}^\ast\geq 2Er$ is never satisfied, and to leading order in $p$ there are no local minima at a function value $E>0$. That is, all local minima are global minima in the $p\to\infty$ limit when $\gamma\geq 1$. Though the choice of model is slightly different, we suspect that a related phenomenon may be what gives rise to the phase transition in training numerically observed in~\citet{kiani2020learning,wiersema2020exploring,kim2020universal,kim2021quantum,PhysRevA.103.032607}.

\subsubsection{\texorpdfstring{$p\ll m$}{p≪m}}\label{sec:small_gamma_disc}

When the number of distinct parameters $p$ is $\operatorname{poly}\left(n\right)$ and considering a physically relevant problem Hamiltonian such that the number of degrees of freedom $m$ is $\exp\left(n\right)$, we have that $p\ll m$ (i.e. $\gamma\ll 1$) for large $n$. In this limit, the spectral distribution $\mu_E^\ast$ is dominated by the Wishart term of $\bm{C}$, as its eigenvalues are $\operatorname{O}\left(1\right)$ while the eigenvalues of the GOE term are $\operatorname{O}\left(\sqrt{\gamma}\right)$. Furthermore, the Marchenko--Pastur distribution in this limit only has support at $\lambda=1+\operatorname{O}\left(\sqrt{\gamma}\right)$. Therefore, the expected number of local minima at a function value $E$ will be proportional to
\begin{equation}
    \mathbb{E}\left[\operatorname{Crt}_0\left(E\right)\right]\propto\ce^{-mE+\operatorname{o}\left(p\right)} E^{m-\frac{p}{2}}\left(1-2E+\operatorname{O}\left(\sqrt{\gamma}\right)\right)^p\bm{1}\left\{0\leq E+\operatorname{O}\left(\sqrt{\gamma}\right)\leq\frac{1}{2}\right\}.
\end{equation}
In particular, up to shifts on the order of $\sqrt{\gamma}$, the distribution of local minima is roughly that of a compound confluent hypergeometric (CCH) distribution~\citep{gordy}. The CCH distribution can be considered a generalization of the beta distribution, and for our parameters has mean on the order of $\frac{1}{2}-\gamma$ and standard deviation on the order of $m^{-1}$. Restoring the overall scaling in \eqref{eq:ham_loss_intro}, this implies that in this limit the local minima of the variational loss function exponentially concentrate (in expectation) near half the mean eigenvalue of $H-\lambda_1$ instead of the smallest eigenvalue. Even worse, the CCH distributed form of the local minima implies that, even when beginning at an initial function value well below half of the mean eigenvalue of $H-\lambda_1$, the found loss will only improve by a fraction of the initial function value before the optimization algorithm finds a local minimum. This is insufficient to find the optimal loss to constant additive error when beginning training at a random point, as is often the goal in VQAs~\citep{peruzzo2014variational}. Empirically, we find that this occurs not just in expectation but also for individual model instances in Sec.~\ref{sec:numerics}.

\section{Numerical Experiments}\label{sec:numerics}

We now test our analytic predictions using numerical simulations. First, we investigate the empirical performance of the class of randomized models we study theoretically, and give numerical evidence of things we were unable to prove. Then, we give numerical evidence that, for models dependent on the objective Hamiltonian, the effective degrees of freedom parameter $m$ can be much smaller than predicted. In all cases, we numerically test the predictions of our results by modeling the ground state of the 1D $n$ site \emph{spinless Fermi--Hubbard Hamiltonian}~\citep{negele2018quantum} at half filling. Here, we take units such that the mean eigenvalue of the considered Hamiltonian (minus its smallest eigenvalue) is $E=1$. We give further details of our numerical simulations in Appendix~\ref{sec:det_num_sims}.

\subsection{Empirical Performance of Random Ansatzes}\label{sec:rand_numerics}

\begin{figure*}[ht]
    \begin{center}
        \includegraphics[width=0.85\textwidth]{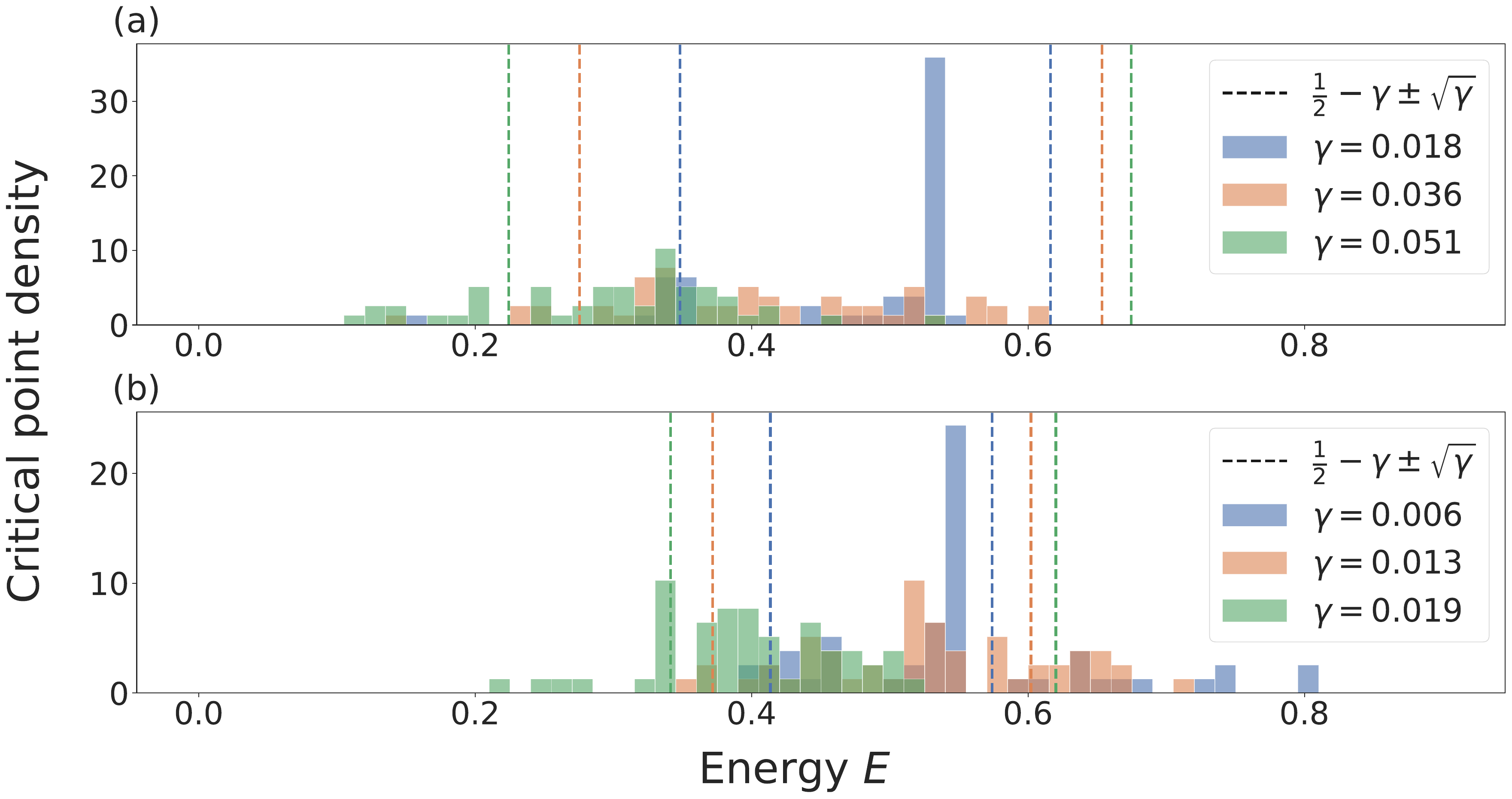}
        \caption{Here we plot the distribution of found local minima found after $52$ separate training instances using the randomized model on (a) $2^n=64$- and (b) $2^n=256$-dimensional models. Dashed lines denote the predicted region local minima will lie. Note the clustering of local minima at a finite function value when $\gamma\ll 1$.}
        \label{fig:ran_ansatz}
    \end{center}
\end{figure*}

\begin{figure}[ht]
    \begin{center}
        \includegraphics[width=0.75\textwidth]{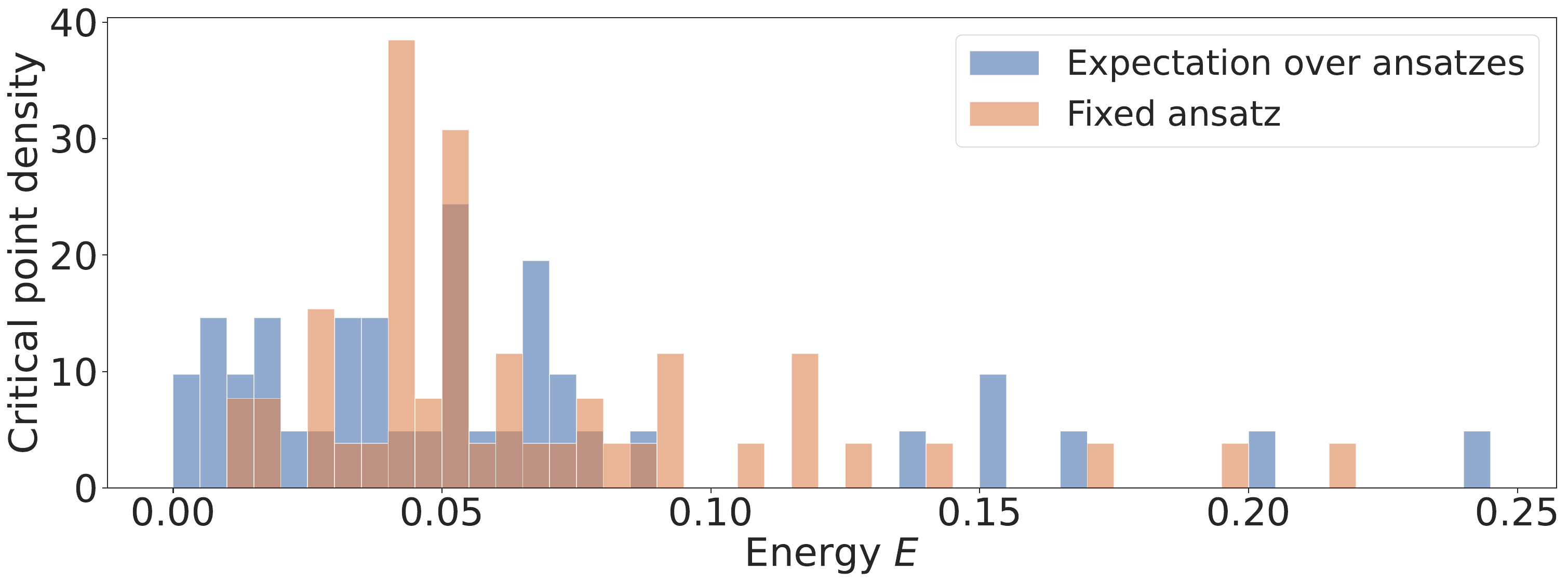}
        \caption{Here we plot the distribution of found local minima found after $52$ separate training instances using the randomized model, with $p=48$ and $2^n=64$ model dimension. For even a small model size, qualitatively the expected distribution of critical points and the distribution of critical points for a fixed random ansatz are in agreement.}
        \label{fig:expected_vs_dist}
    \end{center}
\end{figure}

First, we analyzed the performance of a VQA on this loss function via the random model construction procedure defined in Theorem~\ref{thm:vqa_as_whrfs_inf}. Previous numerical results on related Hamiltonian agnostic ansatzes have already shown the concentration of local minima far away in loss value from the global minimum below some degrees-of-freedom transition, and concentration at the global minimum above this transition~\citep{kiani2020learning,PhysRevA.103.032607}. Here, we tracked where our analysis predicts the local minima to lie as a function of $\gamma$ for $\gamma\ll 1$, up to deviations on the order of $\sqrt{\gamma}$ that arise from numerically considering the problem at finite size (as discussed in Sec.~\ref{sec:small_gamma_disc}).

Concretely, for a given training instance and depth $q=p$, we generated an ansatz $\ket{\bm{\theta}}$ composed of $p$ layers of Pauli rotations, where each Pauli rotation was chosen uniformly from all nonidentity Pauli matrices on $n$ qubits. The numbers of model layers we consider are typical of current physical implementations of Hamiltonian agnostic VQAs~\citep{kandala2017hardware}. A summary of the normalized distribution of found local minima for the randomized model with model dimension $2^n=64,256$ is given in Figure~\ref{fig:ran_ansatz}, along with the predicted region in which all local minima should lie in the $p\to\infty$ limit as discussed in Theorem~\ref{thm:asymptotic_loc_min_dist}. See Appendix~\ref{sec:det_num_sims} for details on how this distribution was generated.

We see that almost all found local minima lie within the predicted region, even at small $p,n$. In particular, for small $\gamma$, the distribution of local minima is almost entirely localized within $\sqrt{\gamma}$ of the predicted $\frac{1}{2}-\gamma$ (in units of the mean eigenvalue of $H-\lambda_1$). Finally, we numerically observe that the distribution of local minima are qualitatively similar in expectation and for a single choice of random model in Figure~\ref{fig:expected_vs_dist}.

\subsection{Empirical Performance of a Hamiltonian Informed Model}\label{sec:hva_numerics}

Previous numerical results~\citep{wiersema2020exploring} on VQAs have shown that only a moderate number of model parameters suffices for efficient training when using a Hamiltonian informed model. As discussed in Sec.~\ref{sec:qgms_as_rfs}, we believe this is due to this class of models effectively limiting the degrees of freedom $m$ of the associated WHRF model; to test this, we performed more numerical experiments using a Hamiltonian informed ansatz. We once again tracked where our analysis predicts the local minima to lie as a function of $\gamma$ for $\gamma\ll 1$, up to deviations on the order of $\sqrt{\gamma}$.

\begin{figure}[ht]
    \begin{center}
        \includegraphics[width=0.75\textwidth]{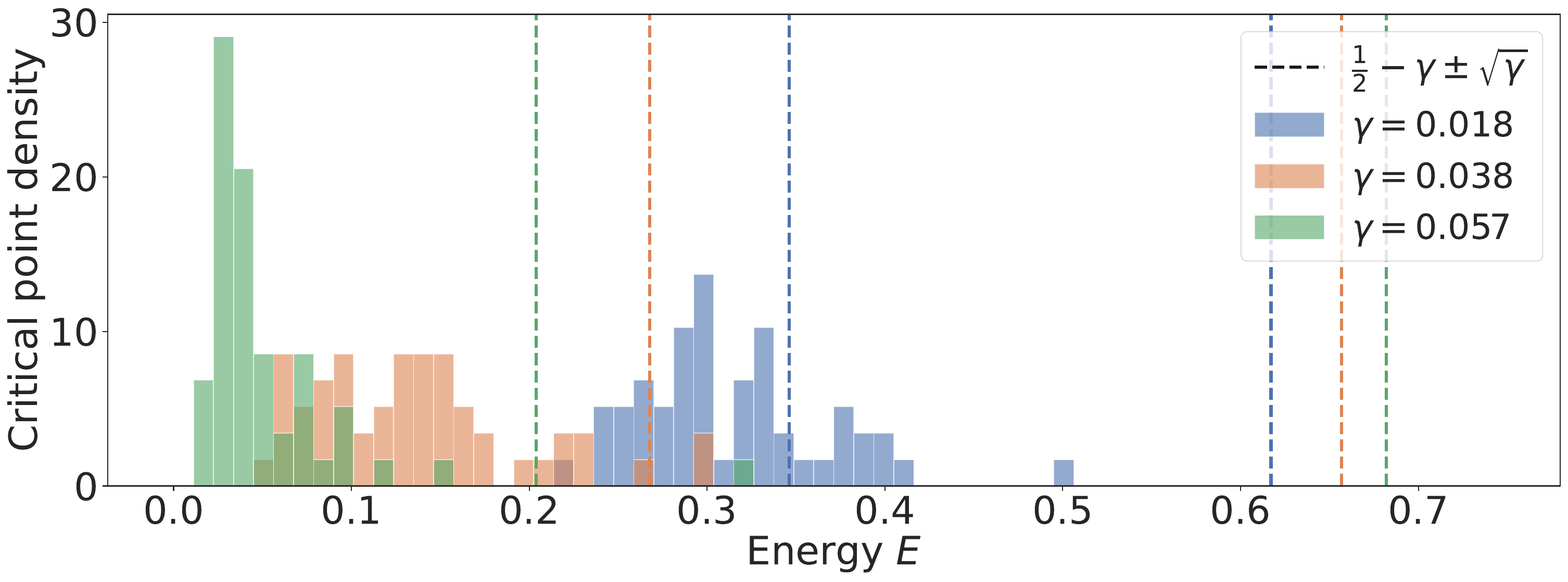}
        \caption{Here we plot the distribution of found local minima after $52$ separate training instances using a Hamiltonian informed model. Dashed lines denote the predicted region local minima will lie. We see that the predicted region is overly pessimistic. We believe that this is due to the Hamiltonian informed model lowering the effective degrees of freedom $m$ of the WHRF instance the ansatz maps to; see Sec.~\ref{sec:qgms_as_rfs}.}
        \label{fig:hva_ansatz}
    \end{center}
\end{figure}
We show the empirical distribution of local minima in Fig.~\ref{fig:hva_ansatz} for $2^n=256$, along with the predicted region local minima should lie as discussed in Sec.~\ref{sec:small_gamma_disc}. The predicted local minima distribution is overly pessimistic (particularly at larger $p$). We suspect this is due to the fact that the ansatz is constructed in a way that minimizes the effective degrees of freedom of the WHRF $m$ such that $\gamma$ is close to $1$ for smaller $p$ than is predicted analytically.

\section{Conclusion}\label{sec:conclusion}

Though variational quantum algorithms are perhaps the most promising way to use the error-prone quantum devices of today for practical computational tasks, there are many caveats with regard to their trainability. In particular, previous work has shown that utilizing deep quantum models that are independent of the problem Hamiltonian can introduce a vanishing gradient phenomenon where, though the model is expressive enough to capture the ground state of interest, in practice optimizing the loss function is infeasible~\citep{mcclean2018barren,cerezo2020costfunctiondependent,marrero2020entanglement}. We extended these results by showing a particular class of random models independent of the problem instance not only can exhibit these vanishing gradients at large depth, but also has a concentration of local minima near the mean eigenvalue of the objective Hamiltonian. This is in contrast to the case in traditional neural networks, where even generic model structure tends to lead to a concentration of local minima in a band near the global minimum of the loss function.

Though our results may not seem encouraging for quantum generative models, we emphasize that we expect our analytic results to hold only when the model is independent of the problem Hamiltonian. Indeed, we found empirically (and heuristically) good performance for a particular Hamiltonian informed ansatz, where our analytic results seem much too pessimistic. In principle, this new way of thinking about variational quantum algorithms may inform future quantum generative model design; we leave for future work the study of how various model choices may impact the distribution of critical points of the loss function positively, and how practical considerations such as noisy model implementations may play a role.

\subsubsection*{Reproducibility Statement}

The full statement of Theorem~\ref{thm:vqa_as_whrfs_inf}, along with all of its assumptions and its proof, is given in Appendix~\ref{sec:vqa_to_whrf}. Similarly, Lemma~\ref{lemma:hess_dist} and Theorem~\ref{thm:exact_crt_dist} are fully stated and proved in Appendix~\ref{sec:whrfs}. These results rely on the Kac--Rice formula, informally stated as Lemma~\ref{lemma:kac_rice_formula}; for completeness, we give the full assumptions of this formula in Appendix~\ref{sec:kac_rice_assumptions}, and there also show the assumptions are met for the model we consider. Furthermore, the full assumptions and proof of Theorem~\ref{thm:asymptotic_loc_min_dist} is given in Appendix~\ref{sec:log_asymptotics}, along with the proofs of supplementary lemmas needed to prove the statement. Finally, full details of our numerical simulations and how to reproduce the results are given in Appendix~\ref{sec:det_num_sims}.

\subsubsection*{Acknowledgments}

We are grateful to Tongyang Li, John Napp, and Aram Harrow for insightful discussion and helpful remarks regarding a draft of this work. E.R.A. is supported by the National Science Foundation Graduate Research Fellowship Program under Grant No. 4000063445, and a Lester Wolfe Fellowship and the Henry W. Kendall Fellowship Fund from M.I.T.

\bibliography{main}

\appendix

\section{Variational Quantum Algorithms as Random Fields}\label{sec:vqa_to_whrf}

\subsection{Variational Quantum Algorithms}\label{sec:vqas}

Variational quantum algorithms (VQAs) are a class of quantum generative model where one expresses the solution of some problem as the smallest eigenvalue and its corresponding eigenvector (typically called the \emph{ground state}) of an objective Hermitian matrix $H$. Given a choice of generative model---often called an \emph{ansatz} in the quantum algorithms literature:
\begin{equation}
    \ket{\bm{\theta}}=\prod\limits_{i=1}^q U_i\left(\theta_i\right)\ket{\psi_0}
    \label{eq:ansatz}
\end{equation}
that for some choice $\bm{\theta}$ closely approximates the ground state of $H$, the solution is encoded as the minimum of the loss function
\begin{equation}
    \tilde{F}\left(\bm{\theta}\right)=\bra{\bm{\theta}}H\ket{\bm{\theta}}.
    \label{eq:practical_loss_func}
\end{equation}
This loss function can be computed on a quantum computer efficiently, under some conditions on the matrix $H$. For simplicity of analysis, throughout this paper we will consider the loss function
\begin{equation}
    F\left(\bm{\theta}\right)=\frac{\bra{\bm{\theta}}H\ket{\bm{\theta}}-\lambda_1}{\overline{\lambda}-\lambda_1},
    \label{eq:loss_func}
\end{equation}
where $\lambda_1$ is the smallest eigenvalue of $H$; this has the same loss landscape as \eqref{eq:practical_loss_func}, but is minimized at $F=0$ (assuming a sufficiently expressive $\ket{\bm{\theta}}$) and is normalized by the mean eigenvalue of $H-\lambda_1$. In \eqref{eq:ansatz}, $q$ is referred to as the \emph{depth} of the circuit, and the initial vector (i.e. quantum state) $\ket{\psi_0}\in\mathbb{C}^{2^n}$ is fixed throughout the optimization procedure. Different choices of $U_i$ constitute different choices of ansatz for the ground state of $H$.

Ansatz design choice generally falls in one of two categories: Hamiltonian informed ansatzes, and Hamiltonian agnostic ansatzes. Examples of Hamiltonian informed ansatzes include the chemistry-inspired unitary coupled cluster ansatz~\citep{peruzzo2014variational} and the adiabatically inspired quantum approximate optimization algorithm (QAOA) ansatz~\citep{farhi2014quantum}, known outside of the context of combinitarial optimization as the Hamiltonian variational ansatz (HVA)~\citep{PhysRevA.92.042303}. These ansatzes depend solely on the problem objective Hamiltonian $H$, and are usually physically motivated ansatzes which, in some limit, have convergence guarantees. Hamiltonian agnostic ansatzes, conversely, depend solely on the hardware the VQA is run on, and not at all on the problem objective $H$. This class of ansatzes includes the hardware-efficient ansatz~\citep{kandala2017hardware}. These ansatzes are designed to eke out as much depth as possible in the objective ansatz $\ket{\bm{\theta}}$ by using $U_i$ that can be easily implemented on the given quantum device.

Though hardware-efficient ansatzes generally can be run at larger depth $q$ than Hamiltonian informed ansatzes, the very generic nature of the ansatz circuit means this class of ansatz is more difficult to train, often encountering barren plateaus in the optimization landscape that are difficult to escape from when $q$ is large~\citep{mcclean2018barren,cerezo2020costfunctiondependent,marrero2020entanglement}. Heuristically, this can be understood as Hamiltonian agnostic objective functions being so expressive that it must explore essentially all of Hilbert space to find a local minimum, exponentially suppressing the gradients of the loss function~\citep{10.1002/qute.201900070}.

In this work we consider a class of ansatzes that, like the hardware-efficient ansatz, is independent of the problem instance. In particular, we consider random parameterized ansatzes of the form:
\begin{equation}
    U_i\equiv\ce^{-\ci\theta_i Q_i}
\end{equation}
for Pauli operators $Q_i$, where each $Q_i$ is drawn uniformly and independently from the $n$-qubit Pauli operators. Throughout this paper, we will use $q$ to denote the total number of Pauli rotations in $\ket{\bm{\theta}}$ as in \eqref{eq:ansatz}, $p$ to denote the total number of independent parameters $\theta_i$, and $r_i$ to denote the number of Pauli rotations governed by a single independent parameter $\theta_i$. For simplicity, we will assume $r_i=r_j\equiv r$ for all $i,j$, and thus take
\begin{equation}
    r\equiv\frac{q}{p}
\end{equation}
to be a natural number.

\subsection{Mapping Variational Quantum Algorithms to Random Wishart Fields}\label{sec:vqa_to_whrf_mapping}

With the background of VQAs in place, we will now show the asymptotic (weak) equivalence of VQAs with the random choice of ansatz described in Appendix~\ref{sec:vqas} to Wishart random fields. Throughout this section, we will consider a problem Hamiltonian $H$ on $n$ qubits, with ground state energy $\lambda_1$ and mean eigenvalue $\overline{\lambda}$. We also define the \emph{degrees of freedom} parameter
\begin{equation}
    m\equiv\frac{\left\lVert H-\lambda_1\right\rVert_\ast^2}{\left\lVert H-\overline{\lambda}\right\rVert_{\text{F}}^2},
    \label{eq:m_def}
\end{equation}
whose interpretation will be discussed in Appendix~\ref{sec:validity_of_mapping_disc}. Twice the degrees of freedom parameter $m$ will turn out to govern the location of the transition from the underparameterized to the overparameterized regime (see Appendix~\ref{sec:whrfs}), and for physically relevant Hamiltonians is expected to be exponential in $n$ (see Appendix~\ref{sec:validity_of_mapping_disc}). We will also consider the Pauli decomposition of the nontrivial part of $H$:
\begin{equation}
    H-\overline{\lambda}=\sum\limits_{i=1}^A\alpha_i R_i,
\end{equation}
where $A$ is the number of terms in the Pauli decomposition and $\bm{\alpha}$ the Pauli coefficients.

We begin by showing the convergence of a class of randomized VQAs to a weighted sum of Wishart random fields at a rate $\gtrsim\log\left(n\right)$; the seemingly arbitrary shifts by the mean eigenvalue $\overline{\lambda}$ and the ground state energy $\lambda_1$ here will aid in future discussion, when we approximate the weighted sum of Wishart random fields with a single random field. The wide variety of assumptions will be discussed in detail in Appendix~\ref{sec:validity_of_mapping_disc}.
\begin{theorem}[VQAs as RFs]
    Let $\ket{\psi_0}$ be an arbitrary stabilizer state (e.g. a computational basis state) on $n$ qubits. Fix a sequence of $q$ angles $\theta_i\in\left[-\cpi,\cpi\right]$ such that each $\theta_i$ is present $r$ times in the sequence. We let $p=\frac{q}{r}$ denote the number of distinct parameters. Select an ansatz
    \begin{equation}
        \ket{\bm{\theta}}\equiv\prod\limits_{i=1}^q U_i\left(\bm{\theta}\right)\ket{\psi_0}\equiv\prod\limits_{i=1}^q\ce^{\mp\ci\theta_i Q_i}C\ket{\psi_0}
    \end{equation}
    by independently at random drawing each $\pm Q_i$ uniformly from the $n$-qubit Pauli group $\mathbb{P}_n$ and $C$ from the $n$-qubit Clifford group $\mathbb{C}_n$. Consider the scaled and shifted
    \begin{equation}
        \tilde{H}\equiv\frac{H-\overline{\lambda}}{\overline{\lambda}-\lambda_1}=\frac{H-\overline{\lambda}}{2^{-n}\left\lVert H-\lambda_1\right\rVert_\ast},
        \label{eq:h_tilde}
    \end{equation}
    where $\left\lVert\cdot\right\rVert_\ast$ is the nuclear norm. Then, the random variational objective function
    \begin{equation}
        F_{\text{VQA}}\left(\bm{\theta}\right)=\frac{\bra{\bm{\theta}}H\ket{\bm{\theta}}-\lambda_1}{\overline{\lambda}-\lambda_1}=\frac{\bra{\bm{\theta}}H\ket{\bm{\theta}}-\lambda_1}{2^{-n}\left\lVert H-\lambda_1\right\rVert_\ast}
        \label{eq:vqa_obj}
    \end{equation}
    has first two moments exponentially close in $n$ as $n\to\infty$ to those of
    \begin{equation}
        F_{\text{XHX}}\left(\bm{w}\right)=2^{-n}\left(\bigotimes\limits_{i=1}^p\bm{w_i}^\intercal\right)^{\otimes r}\cdot\bm{X}\cdot\bm{\tilde{H}}\cdot\bm{X}^\dagger\cdot\left(\bigotimes\limits_{i=p}^1\bm{w_i}\right)^{\otimes r}+1,
        \label{eq:xhx}
    \end{equation}
    where $\bm{w_i}$ are points on the circle parameterized by $\theta_i$ and $\bm{X}$ is a matrix of i.i.d. complex standard jointly normal random variables. Furthermore, assuming
    \begin{equation}
        \frac{\left\lVert\bm{\alpha}\right\rVert_\infty}{\overline{\lambda}-\lambda_1}\leq f\left(n\right)^{-1}
        \label{eq:min_frust_assump}
    \end{equation}
    for some $f\left(n\right)=\operatorname{\Omega}\left(1\right)$, their distributions are bounded in L\'{e}vy distance by $\operatorname{\tilde{O}}\left(\left(\frac{\lg\left(A\right)f\left(n\right)n}{A}\right)^{-1}\right)$.
    \label{thm:xhx}
\end{theorem}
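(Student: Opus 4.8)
The plan is to expand each Pauli rotation via $\ce^{\mp\ci\theta_i Q_i}=\cos\left(\theta_i\right)I\mp\ci\sin\left(\theta_i\right)Q_i$, which is exact because every $Q_i^2=I$. Multiplying out the product over the $q$ rotations gives a sum over bit-strings $\bm{s}\in\left\{0,1\right\}^q$ of trigonometric coefficients times products of the sampled Paulis times $C$, so that $\bra{\bm\theta}H\ket{\bm\theta}$ becomes a Hermitian quadratic form $\bm{c}^\dagger\bm{M}\bm{c}$ with $\bm{c}=\left(\bigotimes_{i=1}^p\bm{w_i}\right)^{\otimes r}$, $\bm{w_i}=\left(\cos\theta_i,\mp\ci\sin\theta_i\right)^\intercal$ the point on the circle (the $\otimes r$ coming from each parameter appearing $r$ times), and $\bm{M}_{\bm{s}',\bm{s}}=\bra{\psi_0}C^\dagger P_{\bm{s}'}^\dagger HP_{\bm{s}}C\ket{\psi_0}$ for Pauli words $P_{\bm{s}}$. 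Writing $H=\left(\overline{\lambda}-\lambda_1\right)\bm{\tilde{H}}+\overline{\lambda}$ as in \eqref{eq:h_tilde} and using $\langle\bm\theta|\bm\theta\rangle=1$ produces the additive $+1$ of \eqref{eq:xhx}, so the task reduces to matching the first two moments (and then the law) of the random matrix $\bm{\tilde{M}}_{\bm{s}',\bm{s}}=\bra{\psi_0}C^\dagger P_{\bm{s}'}^\dagger\bm{\tilde{H}}P_{\bm{s}}C\ket{\psi_0}$ against those of $2^{-n}\bm{X}\bm{\tilde{H}}\bm{X}^\dagger$ for a Gaussian $\bm{X}$.

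For the first moment I would average over $C$ first: since the Clifford group is a unitary $1$-design, $\mathbb{E}_C\left[C\ket{\psi_0}\bra{\psi_0}C^\dagger\right]=2^{-n}I$, giving $\mathbb{E}_C\left[\bm{\tilde{M}}_{\bm{s}',\bm{s}}\right]=2^{-n}\operatorname{tr}\left(\bm{\tilde{H}}P_{\bm{s}}P_{\bm{s}'}^\dagger\right)$. Averaging over the i.i.d.\ uniform Paulis $Q_i$, the identity component of $P_{\bm{s}}P_{\bm{s}'}^\dagger$ contributes a multiple of $\operatorname{tr}\left(\bm{\tilde{H}}\right)=0$ (the shift by $\overline{\lambda}$ in \eqref{eq:h_tilde} makes $\bm{\tilde{H}}$ traceless), while every non-identity Pauli word averages to an exponentially small mean Pauli coefficient of the bounded traceless $\bm{\tilde{H}}$. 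This matches $\mathbb{E}\left[2^{-n}\bm{X}\bm{\tilde{H}}\bm{X}^\dagger\right]=2^{-n}\operatorname{tr}\left(\bm{\tilde{H}}\right)I=0$, hence $\mathbb{E}\left[F_{\text{XHX}}\right]=1$, up to an $\ce^{-\operatorname{\Omega}\left(n\right)}$ error.

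The second moment is the technical heart and the step I expect to be the main obstacle. Here I would average the product $\bm{\tilde{M}}_{\bm{s}_1',\bm{s}_1}\bm{\tilde{M}}_{\bm{s}_2',\bm{s}_2}$ over $C$ using that the Clifford group is a unitary $3$-design, hence in particular a $2$-design, so its second moment equals the Haar value and $\mathbb{E}_C\left[\left(C\ket{\psi_0}\bra{\psi_0}C^\dagger\right)^{\otimes 2}\right]$ is the normalized projector onto the symmetric subspace, proportional to $I+\operatorname{SWAP}$. The remaining average over the $Q_i$ is a combinatorial bookkeeping problem: one tracks which of the four Pauli words share factors and which collapse to the identity, and shows the surviving contractions reproduce exactly the Wick/Isserlis pairings of the complex Gaussian form $2^{-n}\bm{c}^\dagger\bm{X}\bm{\tilde{H}}\bm{X}^\dagger\bm{c}$, whose second moment I would compute directly. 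The difficulty is that the stabilizer states $P_{\bm{s}}C\ket{\psi_0}$ for different $\bm{s}$ are built from the same Paulis and are therefore highly correlated, so they cannot simply be treated as independent Gaussian vectors; I expect the resolution to be that all non-pairing (correlated) contributions carry extra factors of $2^{-n}$ from traces of non-identity Paulis, leaving the Gaussian answer up to a relative $\ce^{-\operatorname{\Omega}\left(n\right)}$ correction. This is also where the degrees-of-freedom parameter of \eqref{eq:m_def} enters, through $\lVert\bm{\tilde{H}}\rVert_{\text{F}}^2=2^{2n}/m$.

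Finally, matched moments alone do not control the distribution, so the L\'evy-distance bound requires a separate quantitative argument. I would expand in the Pauli basis of $\bm{\tilde{H}}$, writing $\bra{\bm\theta}\bm{\tilde{H}}\ket{\bm\theta}=\sum_{i=1}^A\tilde\alpha_i\bra{\bm\theta}R_i\ket{\bm\theta}$ with $\tilde\alpha_i=\alpha_i/\left(\overline{\lambda}-\lambda_1\right)$, and treat this as a sum of $A$ bounded, weakly dependent terms whose joint moment structure is exactly what the previous steps computed. Assumption \eqref{eq:min_frust_assump} is precisely the Lindeberg/no-dominant-term condition $\lvert\tilde\alpha_i\rvert\leq f\left(n\right)^{-1}$, so a Berry--Esseen-type estimate bounds the L\'evy distance between the law of $F_{\text{VQA}}$ and that of $F_{\text{XHX}}$ (a quadratic form in complex Gaussians) by a rate governed by the number of terms $A$ and the smallness parameter $f\left(n\right)$, yielding the stated $\operatorname{\tilde{O}}\left(\left(\frac{\lg\left(A\right)f\left(n\right)n}{A}\right)^{-1}\right)$. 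I anticipate the residual $\lg\left(A\right)$ and $n$ factors arising from converting operator-norm and third-moment bounds on the individual $\bra{\bm\theta}R_i\ket{\bm\theta}$ into a quantitative weak-convergence rate.
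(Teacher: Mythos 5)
Your expansion of the rotations, the reduction to a quadratic form $\bm{c}^\dagger\bm{M}\bm{c}$ with $\bm{M}_{\bm{s}',\bm{s}}=\bra{\psi_0}C^\dagger P_{\bm{s}'}^\dagger H P_{\bm{s}}C\ket{\psi_0}$, and the design-based averaging over $C$ and the $Q_i$ is exactly the paper's route (its ``Feynman path integral'' expansion), and your anticipated resolution of the second moment---that correlated, non-pairing contributions are exponentially suppressed, with $m$ entering through $\lVert\tilde{H}\rVert_{\text{F}}^2$---is what the paper proves, yielding $\epsilon_2=2^{-\left(n+\lg\left(m\right)\right)}$. One caveat on mechanism: the suppression does not come from ``traces of non-identity Paulis being small'' (a fixed Pauli coefficient of $\tilde{H}$ need not be exponentially small); it comes from grouping path-integral terms under swaps of indices, where pairs cancel exactly whenever the interposed $Q_j$ commutes with the sandwiched operator and double when it anticommutes, and the resulting factors of $2$ and $\frac{1}{2}$ conspire to reproduce the Haar answer. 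Also, for the off-diagonal first moments you do not need any smallness argument at all: each such term contains some $\pm Q_j$ an odd number of times, and the uniform sign makes the expectation exactly zero.

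The genuine gap is your final step. Matching two moments says nothing about L\'{e}vy distance by itself, and a ``Berry--Esseen-type estimate'' cannot close this: Berry--Esseen and Lindeberg-type theorems compare a normalized sum against a \emph{fixed} limit law (the normal), whereas here both $F_{\text{VQA}}$ and $F_{\text{XHX}}$ are complicated, $n$-dependent, and decidedly non-Gaussian (the latter is a weighted sum of exponential-type variables), so there is no off-the-shelf result of that kind to invoke, and you have not specified any exchange/swapping scheme that would handle the strong dependence among the $\bra{\bm{\theta}}R_i\ket{\bm{\theta}}$. What the paper actually does is bound \emph{all} moments $t\geq 3$ of both models---using the stabilizer structure (terms with anticommuting Paulis vanish; maximally dependent Pauli collections contribute at most $2^{-c\left\lfloor\lg\left(d\right)+1\right\rfloor n}$) together with the multinomial theorem and Eq.~\eqref{eq:min_frust_assump}---to conclude that both characteristic functions are entire and both laws are moment-determined, then shows the characteristic functions differ by $\exp\left(\frac{Ax}{f\left(n\right)}-c\lg\left(A\right)n\right)$ on the window $0\leq x\lesssim\frac{\lg\left(A\right)f\left(n\right)n}{A}$, and only then converts characteristic-function closeness into the stated L\'{e}vy bound. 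The window of $x$ over which this comparison holds is precisely where the rate $\operatorname{\tilde{O}}\left(\left(\frac{\lg\left(A\right)f\left(n\right)n}{A}\right)^{-1}\right)$ comes from; without the higher-moment control your proposal has no mechanism producing it.
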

\begin{proof}
    The Feynman path integral representation (i.e. the exact Taylor expansion of the matrix exponentials using the fact that Pauli operators square to the identity) of the objective function \eqref{eq:vqa_obj} is of the form
    \begin{equation}
        F_{\text{VQA}}=\sum\limits_{\bm{\gamma},\bm{\gamma}'\in\left\{0,1\right\}^{\times q}}w_{\bm{\gamma}'}^\dagger w_{\bm{\gamma}}\bra{\psi_0}C^\dagger Q_{\bm{\gamma}'}^\dagger\tilde{H}Q_{\bm{\gamma}}C\ket{\psi_0}+1,
    \end{equation}
    where $\bm{\gamma}$ labels a term in the path integral expansion of $U$,
    \begin{equation}
        w_{\bm{\gamma}}\equiv\prod\limits_{i=1}^q\left\{\begin{array}{cc}
            \cos\left(\theta_i\right), & \text{if }\gamma_i=0\\
            \sin\left(\theta_i\right), & \text{if }\gamma_i=1
        \end{array}\right.
    \end{equation}
    is the amplitude, and
    \begin{equation}
        Q_{\bm{\gamma}}\equiv\left(-\ci\right)^{\left\lVert\bm{\gamma}\right\rVert_0}\prod\limits_{i=1}^q Q_i^{\gamma_i}.
    \end{equation}
    We can rewrite the Feynman path integral as
    \begin{equation}
        F_{\text{VQA}}=\left(\bigotimes\limits_{i=1}^p\bm{w_i}^\intercal\right)^{\otimes r}\cdot\bm{\tilde{X}}\cdot\bm{\tilde{H}}\cdot\bm{\tilde{X}}^\dagger\cdot\left(\bigotimes\limits_{i=p}^1\bm{w_i}\right)^{\otimes r}+1,
        \label{eq:path_int_expansion}
    \end{equation}
    where
    \begin{equation}
        \bm{w_i}\equiv\begin{pmatrix}
        \cos\left(\theta_i\right)\\
        \sin\left(\theta_i\right)
        \end{pmatrix}
    \end{equation}
    and $\bm{\tilde{X}}\in\mathbb{C}^{2^q\times 2^n}$ is a random matrix with rows
    \begin{equation}
        \bra{\tilde{X}}_{\bm{\gamma}}\equiv\bra{\psi_0}C^\dagger Q_{\bm{\gamma}}^\dagger.
    \end{equation}
    
    We will proceed as follows. First, we will bound the difference in the first two moments of \eqref{eq:path_int_expansion} and its equivalent, where the rows of $\tilde{X}$ are i.i.d.\@ Haar random, to be exponentially small in $n$. As Haar random vectors have first three moments matching those of random Gaussian vectors (scaled by $2^{-\frac{n}{2}}$), this gives the desired convergence through second moments. Then, we will show that the characteristic functions at $x$ of \eqref{eq:path_int_expansion} and its i.i.d.\@ Haar random equivalent converge exponentially quickly in $n$ for small enough $x$, giving a convergence in distribution at a rate $\operatorname{\tilde{\Omega}}\left(\frac{\lg\left(A\right)f\left(n\right)n}{A}\right)$ by~\citet{2011166}. Finally, convergence in distribution to \eqref{eq:xhx} will follow as the error in the relevant higher-order moments between Haar random and scaled Gaussian vectors exponentially decays in $n$ by a generalization of Borel's lemma~\citep{jiang2005}.
    
    Obviously the first moment of \eqref{eq:path_int_expansion} matches that of the i.i.d.\@ Haar random case; off-diagonal entries in the path integral average to zero, and the diagonal entries are correct as $C$ is drawn from a unitary $2$-design~\citep{985948}. Let us now consider the second moments of the nontrivial parts of both, where we are concerned with terms of the form:
    \begin{equation}
        c_{\alpha\beta\mu\nu}=\mathbb{E}\left[\bra{\psi_0}C^\dagger Q_{\bm{\gamma_\alpha}}^\dagger HQ_{\bm{\gamma_\beta}}C\ket{\psi_0}\bra{\psi_0}C^\dagger Q_{\bm{\gamma_\mu}}^\dagger HQ_{\bm{\gamma_\nu}}C\ket{\psi_0}\right],
    \end{equation}
    and how they differ from the i.i.d.\@ Haar random equivalent
    \begin{equation}
        h_{\alpha\beta\mu\nu}=\mathbb{E}\left[\bra{\psi_0}U_\alpha^\dagger HU_\beta\ket{\psi_0}\bra{\psi_0}U_\mu^\dagger HU_\nu\ket{\psi_0}\right].
    \end{equation}
    First, assume $\alpha=\beta=\mu=\nu$; as $C$ is drawn from a unitary $2$-design~\citep{985948}, the terms are equal. Similarly, if
    \begin{equation}
        \bm{\gamma_\alpha}\oplus\bm{\gamma_\beta}\oplus\bm{\gamma_\mu}\oplus\bm{\gamma_\nu}\neq 0,
        \label{eq:parity_constraint}
    \end{equation}
    then both expectations are equal to zero; this is because $c_{\alpha\beta\mu\nu}$ must have an odd number of some $Q$, and $h_{\alpha\beta\mu\nu}$ an odd number of some $U$ (or $U^\dagger$).
    
    Let us now consider when the above conditions are not satisfied. We consider simultaneously terms of the form
    \begin{equation}
        \left(\bra{\psi_0}C^\dagger Q_{\bm{\gamma_\alpha}}^\dagger RQ_{\bm{\gamma_\beta}}C\ket{\psi_0}+\gamma_{\alpha j}\leftrightarrow\gamma_{\beta j}\right)\left(\bra{\psi_0}C^\dagger Q_{\bm{\gamma_\mu}}^\dagger R'Q_{\bm{\gamma_\nu}}C\ket{\psi_0}+\gamma_{\mu j}\leftrightarrow\gamma_{\nu j}\right),
        \label{eq:perms}
    \end{equation}
    i.e. all terms summed where unequal components of $\bm{\gamma_\alpha}$ and $\bm{\gamma_\beta}$ (and $\bm{\gamma_\mu}$ and $\bm{\gamma_\nu}$) are swapped. Note that the parity of the permutation determines the sign of the term in \eqref{eq:path_int_expansion} (and thus in \eqref{eq:perms}). Here, $R$ and $R'$ are terms in the Pauli expansion of $\tilde{H}$. Consider the largest $j$ where $\bm{\gamma_\alpha}$ and $\bm{\gamma_\beta}$ differ; consider the sum of each pair of terms in \eqref{eq:perms} that have component $j$ permuted, but are equal at all $k<j$. Each pair of terms is of the form (with relative signs made explicit)
    \begin{equation}
        \begin{aligned}
            \bra{\psi_0}C^\dagger AQ_j A' RB'BC\ket{\psi_0}&-\bra{\psi_0}C^\dagger AA' RB' Q_j BC\ket{\psi_0}\\
            &=2\bra{\psi_0}C^\dagger AQ_j A' RB'BC\ket{\psi_0}\bm{1}_{\left[Q_j,A'RB'\right]\neq\bm{0}}
        \end{aligned}
        \label{eq:zero_rel}
    \end{equation}
    for some $A,A',B,B'$. For all $Q_j$ that commute with $A'RB'$, the two terms cancel. In particular, $Q_jA'RB'$ cannot be proportional to the identity. As $\tilde{H}$ is traceless, both $R$ and $R'$ are also not proportional to the identity. This can be done inductively for all $j$ where $\bm{\gamma_\alpha}$ and $\bm{\gamma_\beta}$ differ.
    
    Consider the case where $\bm{\gamma_\alpha}+\bm{\gamma_\beta}\neq\bm{\gamma_\mu}+\bm{\gamma_\nu}$; we must have that $\gamma_\alpha$ and $\bm{\gamma_\beta}$ have a coordinate $i$ where they are both one, and where $\bm{\gamma_\mu}$ and $\bm{\gamma_\nu}$ are both zero (assuming \eqref{eq:parity_constraint} is not satisfied). By \eqref{eq:zero_rel}, WLOG we can consider the product of Pauli observables between the two $Q_i$ as being not proportional to the identity. Then, averaging over $Q_i$ will yield zero. This is the same as the i.i.d.\@ Haar random case, as every term in the expansion of \eqref{eq:perms} must have only one of some unitary when $\bm{\gamma_\alpha}+\bm{\gamma_\beta}\neq\bm{\gamma_\mu}+\bm{\gamma_\nu}$.
    
    Finally, consider the case where $\bm{\gamma_\alpha}+\bm{\gamma_\beta}=\bm{\gamma_\mu}+\bm{\gamma_\nu}$. Under this constraint, we must have the same number of terms in each sum in \eqref{eq:perms}; we call this number of terms $2^c$. In the Pauli case, every time we combine terms as in \eqref{eq:zero_rel} introduces an overall factor of $4$, and we average only over the anticommuting Pauli operators. As the value of the expectation over $C$ is independent of the (nonidentity) Pauli in the expectation value, this introduces a factor of $\frac{1}{2}$ every time we combine terms. This gives
    \begin{equation}
        2^c\mathbb{E}_{C\sim\mathbb{C}_n}\left[\bra{\psi_0}C^\dagger SC\ket{\psi_0}\bra{\psi_0}C^\dagger S'C\ket{\psi_0}\right],
    \end{equation}
    for some $S$ and $S'$ that are equal if and only if $R=R'$. Similarly, in the i.i.d.\@ Haar random case, only products of terms with $\bm{\gamma_\alpha}=\bm{\gamma_\mu}$ and $\bm{\gamma_\beta}=\bm{\gamma_\nu}$ are homogeneous in their unitaries and give nonzero expectations, yielding
    \begin{equation}
        2^c\mathbb{E}_{U\sim\mathbb{C}_n}\left[\bra{\psi_0}U_\alpha^\dagger RU_\beta\ket{\psi_0}\bra{\psi_0}U_\beta^\dagger R'U_\alpha\ket{\psi_0}\right].
    \end{equation}
    If $R\neq R'$ (and $S\neq S'$), these are both zero. If $R=R'$ (and $S=S'$), the latter is equal to $2^{c-n}$ and the former to $2^{c-n}\left(1+\operatorname{O}\left(2^{-n}\right)\right)$. Putting everything together and explicitly writing the overall factor of $\frac{2^n}{\left\lVert H-\lambda_1\right\rVert_\ast}$, we have that the error in the second moment is on the order of
    \begin{equation}
        \epsilon_2=\frac{2^{2n}}{\left\lVert H-\lambda_1\right\rVert_\ast^2}\left(2^{-n}\sqrt{\sum\limits_{i=1}^A\alpha_i^2}\right)^2,
    \end{equation}
    where $\alpha_i$ are the coefficients of the Pauli expansion of $H-\overline{\lambda}$. We also have that
    \begin{equation}
        \sum\limits_{i=1}^A\alpha_i^2=2^{-n}\left\lVert H-\overline{\lambda}\right\rVert_{\text{F}}^2=m^{-1}2^{-n}\left\lVert H-\lambda_1\right\rVert_\ast^2,
    \end{equation}
    where $m$ is defined as in \eqref{eq:m_def}. Thus,
    \begin{equation}
        \epsilon_2=2^{-\left(n+\lg\left(m\right)\right)}.
        \label{eq:second_moment_error_bound}
    \end{equation}
    
    Let us now consider the $t$th moment for $t\geq 3$. We will bound the higher moments of both models, and show that their characteristic functions have infinite radii of convergence. Then, by showing that the difference in these characteristic functions vanishes exponentially in $n$ for all $x\geq 0$ bounded below $\frac{\lg\left(A\right)f\left(n\right)n}{A}$, we will show that the two models converge in distribution at a rate $\operatorname{\tilde{\Omega}}\left(\frac{\lg\left(A\right)f\left(n\right)n}{A}\right)$.
    
    By grouping terms as in \eqref{eq:perms}, it is sufficient to only bound
    \begin{equation}
        b_t=\left(\overline{\lambda}-\lambda_1\right)^{-t}\mathbb{E}_{C\sim\mathbb{C}_n}\left[\prod\limits_{i=1}^t\left(\sum\limits_{j=1}^A\alpha_j\bra{\psi_0}C^\dagger S_{ij} C\ket{\psi_0}\right)\right],
        \label{eq:b_t}
    \end{equation}
    where $S_{ij}$ is not proportional to the identity, $S_{ij}\neq S_{i'j}$ for all $i\neq i'$, and $A$ is the number of terms in the Pauli decomposition of $\tilde{H}$. If a term in the expansion of \eqref{eq:b_t} contains two $S_{ij}$ that anticommute, the contribution to the moment from that term is zero as $C\ket{\psi_0}$ is a stabilizer state for all $C$. Generally, the contribution to the moment is maximized when the $S_{ij}$ are ``maximally dependent''---that is, for $d$ distinct $S_{ij}$ in a term, the contribution to the moment is maximized when the $S_{ij}$ are generated by a cardinality $\left\lfloor\lg\left(d\right)+1\right\rfloor$ subset of them. Thus, the contribution to the moment is bounded by $2^{-c\left\lfloor\lg\left(d\right)+1\right\rfloor n}$ for some constant $c$~\citep{PhysRevA.70.052328}. Note that this also bounds the i.i.d.\@ Haar random case. Putting everything together and using the multinomial theorem, the $t$th moment of the nontrivial part of both distributions is bounded by
    % Bounded by the case where $S_{ij}=S_{ij'}$.
    \begin{equation}
        b_t\leq\sum\limits_{\substack{\sum\limits_i k_i=t}}2^{-c\left\lfloor\lg\left(\left\lVert\bm{k}\right\rVert_0\right)+1\right\rfloor n}\binom{t}{k_1,\ldots,k_A}\prod\limits_{i=1}^A\left(\frac{\alpha_i}{\overline{\lambda}-\lambda_1}\right)^{k_i}.
    \end{equation}
    This corresponds to the case where $S_{ij}=S_{ij'}\equiv S_i$, i.e. when there is maximal dependence between the matrix elements. Here, $k_i$ indexes how many times $S_i$ appears in a term in \eqref{eq:b_t}, and $\left\lVert\cdot\right\rVert_0$ denotes the number of nonzero coordinates of $\cdot$. By \eqref{eq:min_frust_assump}, as $t\to\infty$ for any given $A$ and $n$,
    \begin{equation}
        \frac{b_t}{t!}\leq\left(1+\operatorname{o}\left(1\right)\right)2^{-t\lg\left(t\right)-\frac{1}{2}\lg\left(2\cpi t\right)+t\lg\left(\frac{\ce A}{f\left(n\right)}\right)-c\lg\left(A\right)n}.
        \label{eq:xhx_moment_bound}
    \end{equation}
    Thus, the Taylor series of the characteristic functions of both distributions have infinite radii of convergence, and both are completely determined by their moments. Furthermore, \eqref{eq:xhx_moment_bound} gives us that the difference in their characteristic functions at $0\leq x<\frac{c\lg\left(A\right)f\left(n\right)n}{A}$ is on the order of $\exp\left(\frac{Ax}{f\left(n\right)}-c\lg\left(A\right)n\right)$ as $n\to\infty$. As the two distributions have equal moments for $t\leq 2$, it can then be shown~\citep{2011166} that the error in L\'{e}vy distance between the two is $\operatorname{\tilde{O}}\left(\left(\frac{\lg\left(A\right)f\left(n\right)n}{A}\right)^{-1}\right)$.
\end{proof}

Now that we have shown the weak convergence of our random class of VQAs to a random field on the hypertorus, we can combine this result with a multidimensional generalization of the Welch--Satterthwaite equation~\citep{10.2307/3002019,10.2307/2332510} to show that our distribution of VQAs has first two moments matching that of a Wishart hypertoroidal random field (WHRF). Once again, under further assumptions on the spectrum of $H$ we will also bound the higher moments of the two distributions to show convergence in distribution.
\begin{theorem}[XHX RFs as WHRFs]
    The random field given by \eqref{eq:xhx} has first two moments equal to the Wishart hypertoroidal random field (WHRF)
    \begin{equation}
        F_{\text{WHRF}}\left(\bm{\theta}\right)=m^{-1}\sum\limits_{i_1,\ldots,i_r,i_1',\ldots,i_r'=1}^{2^p}w_{i_1}\ldots w_{i_r}J_{i_1,\ldots,i_r,i_1',\ldots,i_r'}w_{i_1'}\ldots w_{i_r'},
        \label{eq:whrf}
    \end{equation}
    where $\bm{J}\sim\mathcal{CW}_{2^q}\left(m,\bm{I_{2^q}}\right)$ is a complex Wishart random matrix and the \emph{effective degrees of freedom} defined in \eqref{eq:m_def} is formally a real number, but can be rounded to the nearest natural number with negligible error. Furthermore, assuming the largest eigenvalue of $\tilde{H}$ as defined in \eqref{eq:h_tilde} is at most $2^{cn}$ for some constant $c$ bounded below $1$, their distributions are bounded in L\'{e}vy distance by $2^{-\operatorname{\Omega}\left(\min\left(n,\lg\left(m\right)\right)\right)}$.
    \label{thm:xhx_as_whrfs}
\end{theorem}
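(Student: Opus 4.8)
The plan is to recognize both fields as quadratic forms in standard complex Gaussian vectors, match their first two moments exactly by Wick's theorem (this being the content of the multivariate Welch--Satterthwaite matching), and then upgrade to convergence in distribution through a cumulant comparison that exploits the eigenvalue bound on $\bm{\tilde{H}}$. First I would recast both fields in a common form. Writing $\bm{u}\equiv\left(\bigotimes_{i=p}^1\bm{w_i}\right)^{\otimes r}$ for the real, unit tensor vector on the hypertorus and setting $\bm{a}\equiv\bm{X}^\dagger\bm{u}$, the independence of the columns of $\bm{X}$ and the normalization of $\bm{u}$ make $\bm{a}$ a standard circularly symmetric complex Gaussian in $\mathbb{C}^{2^n}$, so that $F_{\text{XHX}}(\bm{w})-1=2^{-n}\bm{a}^\dagger\bm{\tilde{H}}\bm{a}$. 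Likewise, writing $\bm{J}=\sum_{k=1}^m\bm{g_k}\bm{g_k}^\dagger$ gives $F_{\text{WHRF}}(\bm{w})=m^{-1}\sum_{k=1}^m\lvert\bm{g_k}^\dagger\bm{u}\rvert^2$, an average of $m$ i.i.d.\ unit-mean exponentials.

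Next I would compute the mean and covariance functions of both fields. For the means, $\mathbb{E}[F_{\text{XHX}}]=2^{-n}\operatorname{Tr}(\bm{\tilde{H}})+1=1$ since $\bm{\tilde{H}}$ is traceless, and $\mathbb{E}[F_{\text{WHRF}}]=1$ since $\bm{u}$ is a unit vector, so the means agree. For the covariances, let $\rho\equiv\langle\bm{w}^{\otimes},\bm{w}'^{\otimes}\rangle$; the cross-covariance of $\bm{a}(\bm{w})$ and $\bm{a}(\bm{w}')$ is $\rho\bm{I}$, and the complex Wick rule (with the traceless term dropping out) yields $\mathbb{E}[(F_{\text{XHX}}(\bm{w})-1)(F_{\text{XHX}}(\bm{w}')-1)]=2^{-2n}\rho^2\operatorname{Tr}(\bm{\tilde{H}}^2)=\rho^2/m$, where the final equality is precisely the definition Eq.~\eqref{eq:m_def} of $m$ together with Eq.~\eqref{eq:h_tilde}, which give $\operatorname{Tr}(\bm{\tilde{H}}^2)=2^{2n}/m$. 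The identical Wick expansion for the WHRF produces $\rho^2/m$ as well, establishing the equality of first two moments.

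The harder part is the L\'{e}vy-distance bound, i.e.\ the quantitative distributional convergence at fixed $\bm{w}$. Diagonalizing $\bm{\tilde{H}}$ reduces the field to $F_{\text{XHX}}-1=2^{-n}\sum_i d_i E_i$ with $E_i$ i.i.d.\ unit-rate exponentials and $d_i$ the eigenvalues of $\bm{\tilde{H}}$, while $F_{\text{WHRF}}$ is a $\Gamma(m,1/m)$ variable; their $t$th cumulants are $2^{-nt}(t-1)!\sum_i d_i^t$ and $(t-1)!\,m^{1-t}$, respectively, agreeing through $t=2$ by the moment match above. I would then use $\lambda_{\min}(\bm{\tilde{H}})=-1$, $\sum_i d_i^2=2^{2n}/m$, and the hypothesis $\lambda_{\max}(\bm{\tilde{H}})\leq 2^{cn}$ to bound $2^{-nt}\lvert\sum_i d_i^t\rvert\leq 2^{-n(1-c)(t-2)}/m$ for all $t\geq 3$, so that every XHX cumulant beyond the second is exponentially suppressed in $n$ whenever $c<1$. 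Comparing against the WHRF cumulants $(t-1)!\,m^{1-t}$ shows the two log-characteristic functions differ, term by term, by a quantity governed by the two competing scales $2^{-n(1-c)}$ and $m^{-1}$; bounding the full moment sequences as in the proof of Theorem~\ref{thm:xhx} to guarantee determinacy and a usable radius for the characteristic functions, and feeding the resulting characteristic-function difference into the smoothing inequality of~\cite{2011166}, converts this into a L\'{e}vy-distance bound of order $2^{-\operatorname{\Omega}(\min(n,\lg(m)))}$, the minimum reflecting exactly these two error sources.

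The main obstacle is this last step: controlling all higher cumulants uniformly and propagating the bound through the characteristic-function estimate with the correct rate, where the strict inequality $c<1$ is precisely what prevents a single large eigenvalue of $\bm{\tilde{H}}$ from dominating the weighted sum and spoiling its approximation by a genuine Wishart field. Finally, since $m$ enters only through these moment identities, rounding it to the nearest integer so that $\bm{J}$ is a bona fide complex Wishart matrix perturbs the distribution by an amount negligible compared to the stated bound.
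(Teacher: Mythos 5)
Your proposal is correct and takes essentially the same route as the paper's proof: reduce both fields to quadratic forms in complex Gaussians (equivalently, after diagonalizing $\bm{\tilde{H}}$, a weighted sum of i.i.d.\ exponentials versus a $\Gamma\left(m,m^{-1}\right)$ variable), match the first two moments---you verify the multivariate Welch--Satterthwaite step directly via Wick's theorem where the paper cites it---and then compare generating functions on a window of width $2^{\min\left(\left(1-c\right)n,\lg\left(m\right)\right)}$ before invoking the same smoothing-inequality reference to obtain the L\'{e}vy-distance rate. Your cumulant bound $2^{-nt}\left\lvert\sum_i d_i^t\right\rvert\leq 2^{-n\left(1-c\right)\left(t-2\right)}/m$, obtained from $\lambda_{\min}=-1$, $\sum_i d_i^2=2^{2n}/m$, and $\lambda_{\max}\leq 2^{cn}$, is just the logarithmic form of the paper's moment-generating-function comparison, so the two arguments coincide in substance.
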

\begin{proof}
    By the unitary invariance of random matrices with Gaussian entries, by diagonalizing $\tilde{H}$ we can rewrite $F_{\text{XHX}}$ as the random field
    \begin{equation}
        F_{\text{XHX}}\left(\bm{w}\right)=\left\lVert H-\lambda_1\right\rVert_\ast^{-1}\left(\bigotimes\limits_{i=1}^p\bm{w_i}\right)^{\otimes r}\cdot\sum\limits_{i=1}^{2^n}\left(\left(h_i-\overline{\lambda}\right)\bm{X}_i\otimes\left(\bm{X}_i\right)^\dagger+\overline{\lambda}-\lambda_1\right)\cdot\left(\bigotimes\limits_{i=p}^1\bm{w_i}\right)^{\otimes r},
        \label{eq:xhx_exp}
    \end{equation}
    where $\bm{X}_i$ is the $i$th column of $\bm{X}$ and $h_i$ are the eigenvalues of $H$. The sum over Kronecker products of columns is just the weighted sum of (at most) $2^n$ independent Wishart random variables, each with a single degree of freedom. It is known~\citep{KHURI1994201,khuri2011statistical,7981395} that the first two moments of this weighted sum of independent Wishart random variables is equal (up to rounding of the degrees of freedom) to that of the single Wishart random variable
    \begin{equation}
        \bm{J}\sim\mathcal{CW}_{2^q}\left(m,m^{-1}\bm{I_{2^q}}\right),
        \label{eq:wishart_random_variable}
    \end{equation}
    where $m$ is defined as in \eqref{eq:m_def}.
    
    Let us now consider higher moments of both distributions. A useful property of both $F_{\text{XHX}}$ and $F_{\text{WHRF}}$ is that they are invariant under rotations on the hypertorus $\bm{w}\mapsto\bm{O}\cdot\bm{w}$ (for real orthogonal $\bm{O}\in\operatorname{SO}\left(2\right)^{\otimes p}$) due to the invariance of the Wishart distribution under orthogonal transformations~\citep{eaton}. Due to this property, we will often take
    \begin{equation}
        \bm{w}=\bm{n}\equiv\left(1,0,\ldots,0\right)^\intercal,
    \end{equation}
    i.e. perform calculations at a fixed point $\bm{\theta}=\bm{0}$ on the hypertorus. For instance, by inspection of the marginal distributions of the elements of $\bm{X}\otimes\bm{X}^\dagger$ and $\bm{J}$~\citep{srivastava2003,YU2014121}, we immediately see that
    \begin{equation}
        \left(\bigotimes\limits_{i=1}^p\bm{w_i}\right)^{\otimes r}\cdot\left(\bm{X}\otimes\bm{X}^\dagger\right)\cdot\left(\bigotimes\limits_{i=p}^1\bm{w_i}\right)^{\otimes r}\sim\Gamma\left(1,1\right)
    \end{equation}
    and
    \begin{equation}
        F_{\text{WHRF}}\left(\bm{w}\right)\sim m^{-1}J_{\left(1,\ldots,1\right),\left(1,\ldots,1\right)}\sim\Gamma\left(m,m^{-1}\right);
        \label{eq:en_dist}
    \end{equation}
    here, $\Gamma\left(k,\theta\right)$ is a gamma distributed random variable with shape $k$ and scale $\theta$. We therefore have that the moment-generating function for $F_{\text{XHX}}\left(\bm{w}\right)$ is
    \begin{equation}
        M_{\text{XHX}}\left(x\right)=\ce^x\prod\limits_{i=1}^{2^n}\left(1-\frac{h_i-\overline{\lambda}}{\left\lVert H-\lambda_1\right\rVert_\ast}x\right)^{-1}=\ce^x\det\left(1-2^{-n}\tilde{H}x\right)^{-1}
    \end{equation}
    and for $F_{\text{WHRF}}\left(\bm{w}\right)$ is
    \begin{equation}
        M_{\text{WHRF}}\left(x\right)=\left(1-\frac{x}{m}\right)^{-m}.
    \end{equation}
    Assuming the largest eigenvalue of $\tilde{H}$ is at most $2^{cn}$, we see that these moment generating functions differ at any given $0\leq x<2^{\min\left(\left(1-c\right)n,\lg\left(m\right)\right)}$ by at most $\operatorname{O}\left(2^{-3\left(1-c\right)n}x^3+m^{-3}x^3\right)$. As the two distributions have equal first and second moments, it can then be shown~\citep{2011166} that the error in L\'{e}vy distance between the two is bounded by $2^{-\operatorname{\Omega}\left(\min\left(n,\lg\left(m\right)\right)\right)}$.
\end{proof}

Combining the two theorems, we roughly see that under reasonable assumptions on the spectrum of $H$ the random fields induced by the specific class of VQAs we consider can be approximated by WHRFs up to an error on the order of $\operatorname{\tilde{O}}\left(\left(\frac{\lg\left(A\right)f\left(n\right)n}{A}\right)^{-1}+m^{-1}\right)$ as $m,n\to\infty$.

\subsection{Discussion of the Mapping}\label{sec:validity_of_mapping_disc}

Let us now briefly discuss the intuition and assumptions behind the results proved in Appendix~\ref{sec:vqa_to_whrf_mapping}, beginning with the random class of ansatzes we consider. Of course, in practice, VQA ansatzes are not chosen at random. Indeed, VQA ansatzes have a layered structure that precludes any independence between layers even if the layers were randomly chosen. Though this randomness assumption is strong, heuristically deep enough circuits (that are independent of the problem Hamiltonian) will still look roughly uniform over stabilizer states in the Feynman path integral expansion performed in the proof of Theorem~\ref{thm:xhx}, giving qualitatively similar results. Furthermore, though throughout this paper we consider results in expectation over this distribution of ansatzes, we find numerically in Sec.~\ref{sec:rand_numerics} that our analytic results seem to also hold in distribution; we therefore suspect that our analytic results in Appendix~\ref{sec:whrfs} hold more generally for individual ansatzes that are independent of the problem Hamiltonian.

Given the randomized class of ansatzes, in Theorem~\ref{thm:xhx} we show that the VQA loss function is close in distribution to that of the random field given in \eqref{eq:xhx} (the ``XHX'' model). Intuitively, this just stems from the fact that different paths in the Feynman path integral are matrix elements in uniformly random stabilizer states. We then show that the error induced in higher moments by taking each of these paths to be independent vanishes as $n\to\infty$. To prove this formally, we rely on the boundedness of \eqref{eq:min_frust_assump} to bound higher moments of the distribution. Luckily, in practice this bound holds; for extensive Hamiltonians, one expects $f\left(n\right)\gtrsim n$.

Theorem~\ref{thm:xhx_as_whrfs} extends Theorem~\ref{thm:xhx} by showing that the XHX model can be written as a sum of Wishart models weighted by the (scaled and shifted) eigenvalues of $H$, which can then be approximated by a single Wishart model. Heuristically, one can think of complex Wishart matrices as multidimensional generalizations of the gamma distribution; then, the approximation used in Theorem~\ref{thm:xhx_as_whrfs} is just a multidimensional generalization of the Welch--Satterthwaite approximation~\citep{10.2307/3002019,10.2307/2332510}. This approximation (in both the univariate and multivariate cases) is exact in the first two moments of the distribution when the \emph{effective degrees of freedom} $m$ given in \eqref{eq:m_def} is allowed to be real. In practice, $m$ is rounded to the nearest natural number, inducing a slight error in the approximation. Generally, errors in higher moments in the Welch--Satterthwaite approximation may be large when the moments of the approximated distribution is large, particularly when the coefficients of the sum can have arbitrary sign and are at different scales~\citep{10.2307/3002019,10.1002/bimj.4710360802}. However, for physically relevant Hamiltonians, the spectral radius is much smaller than $2^n$, and the coefficients of the sum are approximately equal. We show that under such conditions, errors in the moment generating functions vanish as $m,n\to\infty$ at the given rate.

The effective degrees of freedom $m$ as in \eqref{eq:m_def} can be interpreted as roughly a signal-to-noise ratio of the mean eigenvalue of $H-\lambda_1$, and generically is at least exponentially large in $n$ (for small eigenvalue spacings, as is typically found in physical Hamiltonians studied with VQAs~\citep{peruzzo2014variational,farhi2014quantum,Romero_2018}). We show in Sec.~\ref{sec:big_gamma_disc} that $m$ sharply dictates the variational loss landscape; for a number of independent parameters $p\geq 2m$, local minima concentrate near the global minimum. Conversely, for $p$ bounded below $2m$, local minima concentrate far away from the global minimum. This would imply that for the class of randomized ansatz we consider here, training large instances is infeasible. However, consider an ansatz that is allowed to depend on the problem instance $H$, such as in the Hamiltonian variational ansatz (HVA)~\citep{PhysRevA.92.042303}. With a clever enough ansatz, one can in principle ``reweigh'' the coefficients of \eqref{eq:xhx_exp} by having a nonuniform distribution over stabilizer states in the Feynman path integral expansion of \eqref{eq:path_int_expansion}, effectively making $m$ smaller. This would be consistent with what was numerically investigated in prior work~\citep{wiersema2020exploring} (and in Sec.~\ref{sec:hva_numerics}), where it was shown that even for a modest number of parameters the distribution of local minima concentrate near the global minimum for the HVA. We leave further investigation in this direction for future work.

Finally, we note that all of our asymptotic equivalence results so far have been shown to converge at a rate
\begin{equation}
    \rho\equiv\lg\left(A\right)f\left(n\right)n/A,
\end{equation}
which is typically $\gtrsim\lg\left(n\right)$ for physically relevant (i.e. two-local with arbitrary range, molecular in the plane wave dual basis~\citep{PhysRevX.8.011044}, etc.) Hamiltonians. In Appendix~\ref{sec:log_asymptotics}, we give the loss landscape of WHRFs (\eqref{eq:whrf}) as $p,m\to\infty$, taking into account large deviations in $p$. If $p$ grows as $\operatorname{\Omega}\left(\lg\left(\rho\right)\right)$, then in principle uncontrolled large deviations in the convergence of VQAs to WHRFs will dominate the asymptotics of the landscape (\eqref{eq:asymptotic_loc_min_dist}). In particular, with probability $\sim\rho^{-1}$, deviations of the eigenvalues of the Hessian on the order of the eigenvalues themselves can occur, which are then ``blown up'' by a factor exponentially large in $p$ if all deviations constructively interfere. Thus, though \eqref{eq:asymptotic_loc_min_dist} holds for WHRFs, it does not necessarily hold for VQAs when $p=\operatorname{\Omega}\left(\lg\left(\rho\right)\right)$. If the deviations of eigenvalues of the Hessian due to the mapping from VQAs to WHRFs are roughly independent between eigenvalues, however, then these deviations are further exponentially suppressed in $p$, and the result holds independently of how $p$ scales with $n$. We believe in practice this is what occurs, and see numerically in Sec.~\ref{sec:rand_numerics} that our analytic results hold well even when $p\gg n$.

\section{The Kac--Rice Formula and its Assumptions}\label{sec:kac_rice_assumptions}

For completeness, we state the formal version of Lemma~\ref{lemma:kac_rice_formula}---with all assumptions---here. We borrow heavily from~\citep{adler2009random}. By $\grad{f}$, we mean the covariant gradient of $f$.
\begin{lemma}[Kac--Rice formula~\citep{adler2009random}]
    Let $M$ be a compact, oriented, $N$-dimensional $C^1$ manifold with $C^1$ Riemannian metric $g$. Let $B\subset\mathbb{R}^K$ be an open set such that $\partial B$ has dimension $K-1$. Let $f:M\to\mathbb{R}^K$ be a random field on $M$, and let $\iota\left(\cdot\right)$ denote the index of $\cdot$. Furthermore, assume that:
    \begin{enumerate}
        \item All components of $f$, $\grad{f}$, and $\grad^2{f}$ are almost surely continuous and have finite variances over $M$.
        \item The marginal density $p_t\left(\grad{f}\left(t\right)\right)$ of $\grad{f}$ at $t\in M$ is continuous at $\grad{f}=\bm{0}$.
        \item The conditional densities $p_t\left(\grad{f}\left(t\right)\mid f\left(t\right),\grad^2{f}\left(t\right)\right)$ are bounded above and continuous at $\grad{f}=\bm{0}$, uniformly in $t\in M$.
        \item The conditional densities $p_t\left(\det\left(\grad^2{f}\left(t\right)\right)\mid\grad{f}\left(t\right)=\bm{0}\right)$ are continuous in the neighborhood of $\det\left(\grad^2{f}\right)=0$ and $\grad{f}\left(t\right)=\bm{0}$, uniformly in $t\in M$.
        \item The conditional densities $p_t\left(f\left(t\right)\mid\grad{f}\left(t\right)=\bm{0}\right)$ are continuous for all $f$ and for all $\grad{f}$ in a neighborhood of $\bm{0}$, uniformly in $t\in M$.
        \item The Hessian moments are bounded, i.e.
        \begin{equation}
            \sup\limits_{t\in M}\max_{i,j}\mathbb{E}\left[\left\lvert\left(\grad^2{f}\left(t\right)\right)_{i,j}\right\rvert^N\right]<\infty.
        \end{equation}
        \item The moduli of continuity with respect to (the canonical metric induced by) $g$ of each component of $f$, $\grad{f}$, and $\grad^2{f}$ all satisfy
        \begin{equation}
            \mathbb{P}\left[\omega\left(\eta\right)>\epsilon\right]=\operatorname{o}\left(\eta^N\right)
        \end{equation}
        for all $\epsilon>0$ as $\eta\to 0^+$.
        
        Then,
        \begin{equation}
            \begin{aligned}
                \mathbb{E}&\left[\operatorname{Crt}_k^f\left(B\right)\right]=\int\limits_Mp_{\bm{\sigma}}\left(\grad{f}\left(\bm{\sigma}\right)=0\right)\\
                &\times\mathbb{E}\left[\left\lvert\det\left(\grad^2{f}\left(\bm{\sigma}\right)\right)\right\rvert\bm{1}\left\{f\left(\bm{\sigma}\right)\in B\right\}\bm{1}\left\{\iota\left(\grad^2{f}\left(\bm{\sigma}\right)\right)\leq k\right\}\mid\grad{f}\left(\bm{\sigma}\right)=0\right]\dd{\bm{\sigma}},
            \end{aligned}
        \end{equation}
        where $\dd{\bm{\sigma}}$ is the volume element induced by $g$ on $M$.
    \end{enumerate}
    \label{lemma:kac_rice_formula_formal}
\end{lemma}

It is obvious by Lemma~\ref{lemma:hess_dist_app} that conditions 2-6 are satisfied by WHRFs given $B=\left(0,u\right)$. Furthermore, as $F$ is a polynomial in $\left\{\cos\left(\theta_i\right),\sin\left(\theta_i\right)\right\}$, $F$ and its derivatives are continuous for any value of the components of $m^{-1}\bm{J}$, and all have finite variance. Similarly, it is easy to see that the modulus of continuity of $f$ and its gradients go as $J\eta^r$ as $\eta\to 0^+$, where $J$ is the largest component of $m^{-1}\bm{J}$. As the distributions of the components of a Wishart matrix have exponential tails, the probability that $J=\Omega\left(\eta^{-r}\right)$ is indeed $\operatorname{o}\left(\eta^N\right)$ and therefore all conditions are satisfied by WHRFs.

\section{The Loss Landscape of Wishart Hypertoroidal Random Fields}\label{sec:whrfs}

\subsection{The Joint Distribution of \texorpdfstring{$F_{\text{WHRF}}$}{FWHRF} and its Derivatives}\label{sec:dist_h_and_derivs}

In order to utilize the Kac--Rice formula (Lemma~\ref{lemma:kac_rice_formula_formal}), we must calculate the joint distribution of the random field
\begin{equation}
    F_{\text{WHRF}}\left(\bm{\theta}\right)=m^{-1}\sum\limits_{i_1,\ldots,i_r,i_1',\ldots,i_r'=1}^{2^p}w_{i_1}\ldots w_{i_r}J_{i_1,\ldots,i_r,i_1',\ldots,i_r'}w_{i_1'}\ldots w_{i_r'},
    \label{eq:whrf_jd}
\end{equation}
with its derivatives. In the course of proving Theorem~\ref{thm:xhx_as_whrfs}, we already have shown that the function value is gamma distributed (see \eqref{eq:en_dist}). Here, we explicitly calculate the distribution of the Hessian when given the function value and that the covariant gradient is zero, and also calculate the distribution of the gradient given the function value. We will heavily lean on the rotational invariance property of the distribution discussed in the proof of Theorem~\ref{thm:xhx_as_whrfs}, with $\bm{n}$ once again the fixed point with all $\bm{\theta_i}=0$. Note that for the given embedding of the hypertorus into $\mathbb{R}^{2^p}$, the Christoffel symbols are zero (i.e. we are considering the Euclidean hypertorus) and thus for the most part we can ignore the distinction between covariant and normal derivatives. Here, we choose local coordinates $\bm{\theta}$ such that:
\begin{equation}
    \bm{w_i}=\begin{pmatrix}
    \cos\left(\theta_i\right)\\
    \sin\left(\theta_i\right)
    \end{pmatrix}.
\end{equation}
Perhaps surprisingly, we will find that conditioned on being at a critical point at a specified energy, the Hessian takes the simple form of a normalized and shifted Wishart matrix summed with a normalized GOE matrix. The gradient conditioned on the function value is similarly simple, given by independent Gaussian variables.

\begin{lemma}[Hessian and gradient distributions]
    The scaled Hessian $m\partial_i\partial_j F_{\text{WHRF}}\left(\bm{w}\right)$ conditioned on $F_{\text{WHRF}}\left(\bm{w}\right)=x$ and $\partial_k F_{\text{WHRF}}\left(\bm{w}\right)=0$ is distributed as
    \begin{equation}
        m\tilde{C}_{ij}\left(x\right)=-2rmx\delta_{ij}+rW_{ij}+r\sqrt{2mx}N_{ij},
    \end{equation}
    where $\bm{W}\sim\mathcal{W}_p\left(2m,\bm{I_p}\right)$ and $\bm{N}\sim GOE_p$ are independent. Furthermore, the scaled gradient $m\partial_k F_{\text{WHRF}}\left(\bm{w}\right)$ conditioned on $F_{\text{WHRF}}\left(\bm{w}\right)=x$ is distributed as
    \begin{equation}
        m\tilde{G}_k\left(x\right)=\sqrt{2mrx}N_k,
    \end{equation}
    where $N_k$ are i.i.d. standard normally distributed random variables independent from all $W_{ij}$ and $N_{ij}$.
    \label{lemma:hess_dist_app}
\end{lemma}
\begin{proof}
    Without loss of generality we take $\bm{w}=\bm{n}$. Let $\bm{i}\in\left\{1,2\right\}^{\times p}$ be the vector with the $i$th component equal to $2$ and all others equal to $1$, $\bm{\left(i,j\right)}$ similar with both the $i$th and $j$th component, and $\bm{b}$ the vector with all components equal to $1$. Taking derivatives explicitly yields
    \begin{equation}
        m\partial_i F_{\text{WHRF}}\left(\bm{n}\right)=2\Re\left\{J_{\left(\bm{i},\bm{b},\ldots,\bm{b}\right),\left(\bm{b},\ldots,\bm{b}\right)}\right\}+\ldots+2\Re\left\{J_{\left(\bm{b},\ldots,\bm{i}\right),\left(\bm{b},\ldots,\bm{b}\right)}\right\}
        \label{eq:grad_form}
    \end{equation}
    and
    \begin{equation}
        \begin{aligned}
            &m\partial_i\partial_j F_{\text{WHRF}}\left(\bm{n}\right)=-2r\delta_{ij}J_{\left(\bm{b},\ldots,\bm{b}\right),\left(\bm{b},\ldots,\bm{b}\right)}\\
            &+2\Re\left\{J_{\left(\bm{i},\bm{b},\ldots,\bm{b}\right),\left(\bm{j},\bm{b},\ldots,\bm{b}\right)}\right\}+2\Re\left\{J_{\left(\bm{i},\bm{b},\ldots,\bm{b}\right),\left(\bm{b},\bm{j},\ldots,\bm{b}\right)}\right\}+\ldots+2\Re\left\{J_{\left(\bm{b},\ldots,\bm{b},\bm{i}\right),\left(\bm{b},\ldots,\bm{b},\bm{j}\right)}\right\}\\
            &+2\Re\left\{J_{\left(\bm{\left(i,j\right)},\bm{b},\ldots,\bm{b}\right),\left(\bm{b},\ldots,\bm{b}\right)}\right\}+2\Re\left\{J_{\left(\bm{i},\bm{j},\ldots,\bm{b}\right),\left(\bm{b},\ldots,\bm{b}\right)}\right\}+\ldots+2\Re\left\{J_{\left(\bm{b},\ldots,\bm{b},\bm{\left(i,j\right)}\right),\left(\bm{b},\ldots,\bm{b}\right)}\right\}.
        \end{aligned}
    \end{equation}
    As $\bm{J}$ is a Wishart matrix with identity scale matrix, it can be written as $\bm{X}\cdot\bm{X}^\dagger$ for $\bm{X}$ a $2^q\times m$ matrix with i.i.d. standard complex normal entries. By performing an LQ decomposition of $\bm{X}$, one can then by inspection determine the distributions of the entries of $\bm{J}$~\citep{srivastava2003,YU2014121}. For ease of notation, we let $\tau:\left\{1,2\right\}^{\times q}\to\left\{1,\ldots,2^q\right\}$ be a mapping between representations of the indices of $J$, with the convention $\tau\left(\left(\bm{b},\ldots,\bm{b}\right)\right)=1$. We then find (taking $\tau\left(\left(\bm{i},\ldots,2\right)\right)<\tau\left(\left(\bm{j},\ldots,2\right)\right)$ WLOG) that
    \begin{align}
        2J_{\left(\bm{b},\ldots,\bm{b}\right),\left(\bm{b},\ldots,\bm{b}\right)}&=2mF_{\text{WHRF}}\left(\bm{n}\right),\\
        2\Re\left\{J_{\left(\bm{i},\ldots,\bm{b}\right),\left(\bm{b},\ldots,\bm{b}\right)}\right\}&=\sqrt{2mF_{\text{WHRF}}\left(\bm{n}\right)}M_{\left(\bm{b},\ldots,\bm{b}\right),\left(\bm{i},\ldots,\bm{b}\right)},\label{eq:explicit_grad_form}\\
        2\Re\left\{J_{\left(\bm{i},\bm{j},\ldots,\bm{b}\right),\left(\bm{b},\ldots,\bm{b}\right)}\right\}&=\sqrt{2mF_{\text{WHRF}}\left(\bm{n}\right)}M_{\left(\bm{b},\ldots,\bm{b}\right),\left(\bm{i},\bm{j},\ldots,\bm{b}\right)};
    \end{align}
    and, for $\tau\left(\left(\bm{i},\ldots,\bm{b}\right)\right)\leq m$,
    \begin{equation}
        \begin{aligned}
            &2\Re\left\{J_{\left(\bm{i},\ldots,\bm{b}\right),\left(\bm{j},\ldots,\bm{b}\right)}\right\}=\sqrt{2\varGamma_{\left(\bm{i},\ldots,\bm{b}\right)}}M_{\left(\bm{i},\ldots,\bm{b}\right),\left(\bm{j},\ldots,\bm{b}\right)}\\
            &+\sum\limits_{\mu=1}^{\tau\left(\left(\bm{i},\ldots,\bm{b}\right)\right)-1}M_{\tau^{-1}\left(\mu\right),\left(\bm{i},\ldots,\bm{b}\right)}M_{\tau^{-1}\left(\mu\right),\left(\bm{j},\ldots,\bm{b}\right)}+\sum\limits_{\mu=1}^{\tau\left(\left(\bm{i},\ldots,\bm{b}\right)\right)-1}\tilde{M}_{\tau^{-1}\left(\mu\right),\left(\bm{i},\ldots,\bm{b}\right)}\tilde{M}_{\tau^{-1}\left(\mu\right),\left(\bm{j},\ldots,\bm{b}\right)}
        \end{aligned}
        \label{eq:non_sing_hess_term}
    \end{equation}
    and otherwise
    \begin{equation}
        \begin{aligned}
            2\Re\left\{J_{\left(\bm{i},\ldots,\bm{b}\right),\left(\bm{j},\ldots,\bm{b}\right)}\right\}&=\sum\limits_{\mu=1}^m M_{\tau^{-1}\left(\mu\right),\left(\bm{i},\ldots,\bm{b}\right)}M_{\tau^{-1}\left(\mu\right),\left(\bm{j},\ldots,\bm{b}\right)}\\
            &+\sum\limits_{\mu=1}^m\tilde{M}_{\tau^{-1}\left(\mu\right),\left(\bm{i},\ldots,\bm{b}\right)}\tilde{M}_{\tau^{-1}\left(\mu\right),\left(\bm{j},\ldots,\bm{b}\right)}.
        \end{aligned}
        \label{eq:sing_hess_term}
    \end{equation}
    Here, $\bm{M}$ and $\bm{\tilde{M}}$ are symmetric with off-diagonal entries i.i.d. drawn from the standard normal distribution, and $\varGamma_{\tau^{-1}\left(\mu\right)}\equiv M_{\tau^{-1}\left(\mu\right),\tau^{-1}\left(\mu\right)}^2$ has entries i.i.d. drawn from $\Gamma\left(m-\mu+1,1\right)$. Note that each $\sqrt{2\varGamma}$ is chi-square distributed with $2\left(m-\mu+1\right)$ degrees of freedom; therefore, \eqref{eq:non_sing_hess_term} and \eqref{eq:sing_hess_term} can be considered as elements of a real Wishart matrix $\bm{\tilde{W}}$ with $2m$ degrees of freedom. Also, note that \eqref{eq:non_sing_hess_term} and \eqref{eq:sing_hess_term} are independent of $\partial_k F_{\text{WHRF}}\left(\bm{n}\right)$ when conditioned on $F_{\text{WHRF}}\left(\bm{n}\right)\equiv x=0$. If $x\neq 0$, the condition $\partial_k F_{\text{WHRF}}\left(\bm{n}\right)=0$ is equivalent to taking each sum over the elements of $\bm{M}$ from $\mu=2$ instead of $\mu=1$, which is equivalent to taking the convention $\tau\left(\left(\bm{b},\ldots,\bm{b}\right)\right)=2^q$ and shifting the indices of $\bm{M}$ and $\bm{\tilde{M}}$. Therefore, the (scaled) Hessian conditioned on $F_{\text{WHRF}}\left(\bm{n}\right)=x$ and $\partial_k F_{\text{WHRF}}\left(\bm{n}\right)=0$ is distributed as
    \begin{equation}
        m\tilde{C}_{ij}\left(x\right)=-2rmx\delta_{ij}+\left(\bm{O}\cdot\bm{\tilde{W}}\cdot\bm{O}^\intercal\right)_{ij}+r\sqrt{2mx}N_{ij};
    \end{equation}
    here, $\bm{N}\sim GOE_p$ (with the convention that diagonal entries $\sim\mathcal{N}\left(0,2\right)$ and off-diagonal entries $\sim\mathcal{N}\left(0,1\right)$), and $\bm{O}$ is a matrix such that $O_{i\mu}=1$ if and only if $\tau^{-1}\left(\mu\right)$ is of the form $\left(\bm{b},\ldots,\bm{i},\ldots,\bm{b}\right)$, and is otherwise equal to $0$. The invariance of the Wishart distribution under orthogonal transformations and partitioning~\citep{eaton,srivastava2003} leads to the final result.
\end{proof}

\subsection{The Exact Distribution of Critical Points}\label{sec:exact_dist_crit_points}

Given the joint distribution of $F_{\text{WHRF}}$, its gradient, and its Hessian, we are now equipped to calculate the expected number of critical points of a given index $k$ using the Kac--Rice formula (Lemma~\ref{lemma:kac_rice_formula_formal}).
\begin{theorem}[Distribution of critical points in WHRFs]
    Let
    \begin{equation}
        \mu_{\bm{C}\left(x\right)}=\frac{1}{p}\sum\limits_{i=1}^p\delta\left(\lambda_i^{\bm{C}\left(x\right)}\right)
    \end{equation}
    be the empirical spectral measure of the random matrix
    \begin{equation}
        \bm{C}\left(x\right)=\frac{r}{m}\left(\bm{W}+\sqrt{2mx}\bm{N}\right),
    \end{equation}
    where $\bm{W}\sim\mathcal{W}_p\left(2m,\bm{I_p}\right)$ and $\bm{N}\sim GOE_p$ are independent and $\lambda_i^{\bm{C}}\left(x\right)$ is the $i$th smallest eigenvalue of $\bm{C}\left(x\right)$. Then, the distribution of the expected number of critical points of index $k$ at an energy $E>0$ of $F_{\text{WHRF}}$ is given by
    \begin{equation}
        \begin{aligned}
            &\mathbb{E}\left[\operatorname{Crt}_k\left(E\right)\right]\\
            &=\left(\frac{\cpi}{r}\right)^{\frac{p}{2}}\Gamma\left(m\right)^{-1}m^{\left(1+\gamma\right)m}\mathbb{E}_{\bm{C}\left(E\right)}\left[\ce^{p\int\ln\left(\lvert\lambda-2rE\rvert\right)\dd{\mu_{\bm{C}\left(E\right)}}}\bm{1}\left\{\lambda_{k+1}^{\bm{C}\left(E\right)}\geq 2rE\right\}\right]E^{\left(1-\gamma\right)m-1}\ce^{-mE},
            \label{eq:crt_pt_dist}
        \end{aligned}
    \end{equation}
    where
    \begin{equation}
        \gamma=\frac{p}{2m}.
    \end{equation}
    \label{thm:exact_crt_dist_app}
\end{theorem}
\begin{proof}
    As discussed in Appendix~\ref{sec:kac_rice_assumptions}, the assumptions of the Kac--Rice formula (i.e. Lemma~\ref{lemma:kac_rice_formula_formal}) are satisfied. Furthermore, due to the invariance of the Wishart distribution with respect to rotations on the hypertorus~\citep{eaton,srivastava2003}, we can integrate out the volume element independently; the volume of $\left(S^1\right)^{\times p}$ is
    \begin{equation}
        \int\limits_{\left(S^1\right)^{\times p}}\dd{\bm{w}}=\left(2\cpi\right)^p.
    \end{equation}
    Additionally, we have from Lemma~\ref{lemma:hess_dist_app} that the probability density of the gradient vector being zero at any $\bm{w}$ conditioned on $H_{\text{WHRF}}\left(\bm{w}\right)=x$ is
    \begin{equation}
        p_{\bm{w}}\left(\grad{H_{\text{WHRF}}}\left(\bm{w}\right)=0\mid H_{\text{WHRF}}\left(\bm{w}\right)=x\right)=\left(\frac{4\cpi rx}{m}\right)^{-\frac{p}{2}}.
    \end{equation}
    Taking the expectation over $x$ via \eqref{eq:en_dist} and using the Hessian distrribution from Lemma~\ref{lemma:hess_dist_app}, we have from Lemma~\ref{lemma:kac_rice_formula} that
    \begin{equation}
        \begin{aligned}
            \mathbb{E}&\left[\operatorname{Crt}_k\left(B=\left(0,E\right)\right)\right]=\left(\frac{\cpi}{r}\right)^{\frac{p}{2}}\Gamma\left(m\right)^{-1}m^{\left(1+\gamma\right)m}\\
            &\times\int\limits_0^E\mathbb{E}_{\bm{C}\left(x\right)}\left[\ce^{p\int\ln\left(\lvert\lambda-2rx\rvert\right)\dd{\mu_{\bm{C}}\left(x\right)}}\bm{1}\left\{\lambda_{k+1}^{\bm{C}\left(x\right)}\geq 2rx\right\}\right]x^{\left(1-\gamma\right)m-1}\ce^{-mx}\dd{x}.
        \end{aligned}
    \end{equation}
    Taking the derivative of this cumulative distribution with respect to $E$ yields the final result.
\end{proof}

\section{Logarithmic Asymptotics via Free Probability Theory}\label{sec:log_asymptotics}

Though \eqref{eq:crt_pt_dist} is exact, it is difficult to use in practice. Luckily, we are able to use a surprising fact about the eigenvalue distributions of Wishart and GOE matrices; asymptotically, the empirical spectral distributions of these matrices weakly converge to fixed distributions. Concretely, in the limit $p\to\infty$ where $\gamma=\frac{p}{2m}$ is held constant, the eigenvalue distribution of $\bm{W}/2m$ where $\bm{W}\sim\mathcal{W}_p\left(2m,\bm{I_p}\right)$ weakly converges to the \emph{Marchenko--Pastur distribution}~\citep{Mar_enko_1967}:
\begin{equation}
    \dd{\mu_{\text{M.P.}}}=\left(1-\gamma^{-1}\right)\bm{1}\left\{\gamma>1\right\}\delta\left(\lambda\right)\dd{\lambda}+\frac{1}{2\cpi\gamma\lambda}\sqrt{\left(\left(1+\sqrt{\gamma}\right)^2-\lambda\right)\left(\lambda-\left(1-\sqrt{\gamma}\right)^2\right)}\dd{\lambda}.
\end{equation}
Similarly, the eigenvalue distribution of $\bm{N}/\sqrt{p}$ where $\bm{N}\sim\operatorname{GOE}_p$ weakly converges to the \emph{Wigner semicircle distribution}~\citep{10.2307/1970008}:
\begin{equation}
    \dd{\mu_{\text{s.c.}}}=\frac{1}{2\cpi}\sqrt{4-\lambda^2}\dd{\lambda}.
\end{equation}
Furthermore, by using free probability theory one can find the asymptotic distribution of eigenvalues for a weighted sum of these matrices, given their eigenbases are in ``generic position'' with respect to each other. We now give a brief review of free probability theory---at least in the context of random matrix theory---here. Later, we will also briefly review large deviations theory, which we use to bound the probability of large deviations from the weak convergence of the eigenvalue distributions of Wishart and GOE matrices. Note that, as we are unable to control large deviations in Theorem~\ref{thm:xhx}, in principle large deviations in the weak convergence of VQAs to WHRFs could dominate the large deviations in WHRFs; however, as discussed in Appendix~\ref{sec:validity_of_mapping_disc}, this provably does not occur at shallow enough depths with respect to $n$, and there are reasons to believe it does not occur even at large depths (which we additionally give numerical evidence for in Sec.~\ref{sec:numerics}).

We begin by reviewing the techniques in free probability theory and large deviations theory that we use in studying the asymptotic behavior of \eqref{eq:crt_pt_dist}.

\subsection{Free Probability Theory}

\emph{Free probability theory} is the study of noncommutative random variables. Specializing to random matrix theory on $N\times N$ matrices, we define the unital linear functional
\begin{equation}
    \phi\left(X\right)\equiv\frac{1}{N}\mathbb{E}\left[\tr\left(X\right)\right]
\end{equation}
as the free analogue of the expectation. Note that the eigenvalues of a matrix $A$ are completely constrained by the trace of powers $A^k$---therefore, one can study the average distribution of the eigenvalues of a random matrix $A$ via the moments $\phi\left(A^k\right)$. \emph{Free independence} (or \emph{freeness}) is a generalization of the notion of independence in commutative probability theory to free probability theory. In the context of random matrix theory, two $N\times N$ random matrices $A$ and $B$ are said to be freely independent if the mixed moments are identically zero; that is,
\begin{equation}
    \phi\left(\left(A^{m_1}-\phi\left(A^{m_1}\right)\right)\left(B^{n_1}-\phi\left(B^{n_1}\right)\right)\ldots\left(A^{m_k}-\phi\left(A^{m_k}\right)\right)\left(B^{n_k}-\phi\left(B^{n_k}\right)\right)\right)=0
\end{equation}
for all $n_i,m_i\in\mathbb{N}$. Roughly, the free independence of two random matrices means that their eigenbases are in ``generic position'' from one another.

Taking the analogy with commutative probability theory further, the analogue of the moment-generating function associated with the distribution of a random variable is the \emph{Stieltjes transform} of the measure $\mu$:
\begin{equation}
    G_\mu\left(z\right)=\int\frac{\dd{\mu\left(t\right)}}{z-t},
\end{equation}
which can be inverted via the Stieltjes inversion formula:
\begin{equation}
    \dd{\mu\left(t\right)}=-\frac{1}{\cpi}\lim\limits_{\epsilon\to 0^+}\Im\left\{G_\mu\left(t+\ci\epsilon\right)\right\}\dd{t}.
    \label{eq:stieltjes_inversion}
\end{equation}
Similarly, the free analogue of the cumulant-generating function is the \emph{$R$-transform}, which can be defined via the Stieltjes transform as the solution to the implicit equation:
\begin{equation}
    \mathcal{R}_\mu\left(G_\mu\left(z\right)\right)+\frac{1}{G_\mu\left(z\right)}=z.
    \label{eq:r_transform}
\end{equation}
The $R$-transform is important in that, if two random variables $A$ and $B$ are freely independent with probability measures $\mu_A$ and $\mu_B$ respectively, the probability measure $\mu_{A+B}$ of $A+B$ satisfies
\begin{equation}
    \mathcal{R}_{\mu_{A+B}}=\mathcal{R}_{\mu_A}+\mathcal{R}_{\mu_B}.
\end{equation}
This can be interpreted as the free analog of the additivity of cumulants for commutative random variables. The probability measure $\mu_{A+B}$ is called the \emph{free convolution} of $\mu_A$ and $\mu_B$, and is denoted using the notation
\begin{equation}
    \mu_{A+B}=\mu_A\boxplus\mu_B.
\end{equation}
Thus, given the probability distributions of two free random variables $A$ and $B$, there is a prescription for determining the probability distribution of their sum by taking their free convolution, just as the convolution in commutative probability theory describes the distribution of the sum of random variables.

\subsection{Large Deviations Theory}

In order to bound the probability of large deviations from the weak convergence of the eigenvalue distribution of $\bm{C}$ to its asymptotic limit we will use results from large deviations theory, which we briefly review here. A sequence of measures $\left\{\mu_n\right\}$ is said to satisfy a large deviation principle in the limit $n\to\infty$ with speed $s\left(n\right)$ and lower semicontinuous rate function $I$ with codomain $\left[0,\infty\right]$ if and only if~\citep{zeitouni1998large}
\begin{equation}
    -\inf\limits_{x\in\varGamma^\circ}I\left(x\right)\leq\lim\inf\limits_{n\to\infty}\frac{1}{s\left(n\right)}\ln\left(\mu_n\left(\varGamma\right)\right)\leq\lim\sup\limits_{n\to\infty}\frac{1}{s\left(n\right)}\ln\left(\mu_n\left(\varGamma\right)\right)\leq-\inf\limits_{x\in\overline{\varGamma}}I\left(x\right)
\end{equation}
for all Borel measurable sets $\varGamma$ that all $\mu_n$ are defined on. Here, $\overline{\varGamma}$ denotes the closure of $\varGamma$ and $\varGamma^\circ$ the interior of $\varGamma$. The rate function $I$ is said to be \emph{good} if all level sets of $I$ are compact. Large deviations theory will be useful for us to bound the probabilities of large deviations of the empirical spectral distribution of $\mu_{\bm{C}\left(x\right)}$ as $p\to\infty$, and show that they do not contribute to leading order in the (logarithmic) asymptotic distribution of critical points. We do this using Varadhan's lemma, which we state now.
\begin{lemma}[Varadhan's lemma~\citep{zeitouni1998large}]
    Suppose $\left\{\mu_n\right\}$ satisfies a large deviation principle with speed $s\left(n\right)$ and good rate function $I$ and let $\phi$ be a real-valued continuous function. Further assume either the \emph{tail condition}
    \begin{equation}
        \lim\limits_{M\to\infty}\lim\sup\limits_{n\to\infty}\frac{1}{s\left(n\right)}\ln\left(\mathbb{E}_{X_n\sim\mu_n}\left[\ce^{s\left(n\right)\phi\left(X_n\right)}\bm{1}\left\{\phi\left(X_n\right)\geq M\right\}\right]\right)=-\infty,
    \end{equation}
    or the \emph{moment condition} for some $\gamma>1$
    \begin{equation}
        \lim\sup\limits_{n\to\infty}\frac{1}{s\left(n\right)}\ln\left(\mathbb{E}_{X_n\sim\mu_n}\left[\ce^{\gamma s\left(n\right)\phi\left(X_n\right)}\right]\right)<\infty.
    \end{equation}
    Then,
    \begin{equation}
        \lim\limits_{n\to\infty}\frac{1}{s\left(n\right)}\ln\left(\mathbb{E}_{X_n\sim\mu_n}\left[\ce^{s\left(n\right)\phi\left(X_n\right)}\right]\right)=\sup\limits_x\left(\phi\left(x\right)-I\left(x\right)\right).
    \end{equation}
\end{lemma}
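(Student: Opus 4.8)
The plan is to establish the claimed limit by proving matching lower and upper bounds, writing $\alpha\equiv\sup_x\left(\phi\left(x\right)-I\left(x\right)\right)$ for the target value and $L_n\equiv s\left(n\right)^{-1}\ln\left(\mathbb{E}_{X_n\sim\mu_n}\left[\ce^{s\left(n\right)\phi\left(X_n\right)}\right]\right)$ for the quantity of interest. I would show $\liminf_{n\to\infty}L_n\geq\alpha$ and $\limsup_{n\to\infty}L_n\leq\alpha$ separately; only the latter uses the tail or moment condition, and it is where the real work lies.

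For the lower bound I would fix an arbitrary point $x_0$ and $\delta>0$. By continuity of $\phi$ there is an open neighborhood $G$ of $x_0$ on which $\phi>\phi\left(x_0\right)-\delta$, so that $\mathbb{E}\left[\ce^{s\left(n\right)\phi\left(X_n\right)}\right]\geq\ce^{s\left(n\right)\left(\phi\left(x_0\right)-\delta\right)}\mu_n\left(G\right)$. Applying $s\left(n\right)^{-1}\ln\left(\cdot\right)$ and invoking the large deviation lower bound $\liminf_n s\left(n\right)^{-1}\ln\left(\mu_n\left(G\right)\right)\geq-\inf_{x\in G}I\left(x\right)\geq-I\left(x_0\right)$ gives $\liminf_n L_n\geq\phi\left(x_0\right)-\delta-I\left(x_0\right)$. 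Sending $\delta\to0$ and taking the supremum over $x_0$ yields $\liminf_n L_n\geq\alpha$.

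For the upper bound I would first reduce to $\phi$ bounded above, then reduce to a compact set, and finally cover it. Under the tail condition, splitting the expectation at a threshold $M$ shows the contribution of $\left\{\phi\geq M\right\}$ is negligible as $M\to\infty$, so I may replace $\phi$ by $\phi\wedge M$; under the moment condition the same reduction follows instead from a H\"{o}lder estimate with exponent $\gamma$. With $\phi$ bounded above, goodness of $I$ makes each level set $K_L=\left\{x:I\left(x\right)\leq L\right\}$ compact; on the complement the crude bound $\mathbb{E}\left[\ce^{s\left(n\right)\left(\phi\wedge M\right)}\mathbf{1}\left\{X_n\in K_L^c\right\}\right]\leq\ce^{s\left(n\right)M}\mu_n\left(K_L^c\right)$ together with the large deviation upper bound gives a $\limsup$ of at most $M-L$, which tends to $-\infty$ as $L\to\infty$. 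On the compact piece $K_L$ I would use continuity of $\phi$ to cover $K_L$ by finitely many open sets $G_{x_1},\ldots,G_{x_N}$ on each of which $\phi\leq\phi\left(x_i\right)+\delta$, and lower semicontinuity of $I$ to ensure $\inf_{\overline{G_{x_i}}}I\geq I\left(x_i\right)-\delta$; bounding the expectation by $\sum_i\ce^{s\left(n\right)\left(\phi\left(x_i\right)+\delta\right)}\mu_n\left(\overline{G_{x_i}}\right)$, applying the upper bound on each closed $\overline{G_{x_i}}$, and using that $s\left(n\right)^{-1}\ln$ of a finite sum is asymptotically the maximum gives $\limsup_n L_n\leq\max_i\left(\phi\left(x_i\right)-I\left(x_i\right)\right)+2\delta\leq\alpha+2\delta$. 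Letting $L\to\infty$, $M\to\infty$, and $\delta\to0$ completes the upper bound.

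The main obstacle is controlling the contribution of regions where $\phi$ is large or where the measures $\mu_n$ place exponentially little mass, i.e. the passage from a compact set to the whole space. The large deviation principle by itself only controls $\mu_n$ on fixed Borel sets, but the exponential weight $\ce^{s\left(n\right)\phi}$ can amplify exponentially rare events; the tail and moment conditions are precisely the hypotheses that tame this amplification, and verifying that either of them suffices---via truncation in the first case and H\"{o}lder's inequality in the second---is the delicate part of the argument.
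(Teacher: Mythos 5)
The paper itself gives no proof of this lemma---it is quoted as a known result from the cited reference of Dembo and Zeitouni---so your proposal can only be measured against the standard textbook argument, which is indeed what you are reconstructing. Your lower bound, the truncation and H\"{o}lder reductions of the tail and moment conditions, and the finite-cover step on the compact level set are all correct and standard.

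However, there is a genuine gap in how you dispose of the complement of the level set. You bound the contribution of $K_L^c$ by $\ce^{s\left(n\right)M}\mu_n\left(K_L^c\right)$ and then claim the large deviation upper bound gives $\limsup_n s\left(n\right)^{-1}\ln\mu_n\left(K_L^c\right)\leq-L$. This does not follow: $K_L^c$ is open (since $K_L$ is closed by lower semicontinuity of $I$), and the LDP upper bound controls a Borel set only through $\inf_{\overline{\varGamma}}I$ over its \emph{closure}. The closure $\overline{K_L^c}$ can contain points of arbitrarily small rate, because a lower semicontinuous $I$ may jump downward at the boundary of its level sets. Concretely, take $I\left(x_0\right)=0$ at an isolated point $x_0$ and $I\left(x\right)=2L+d\left(x,x_0\right)^2$ for $x\neq x_0$; this is a good rate function, $K_L=\left\{x_0\right\}$, and $\overline{K_L^c}$ is the whole space, so $\inf_{\overline{K_L^c}}I=0$ and the LDP upper bound is vacuous. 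Worse, the claimed inequality is genuinely false, not merely unjustified: one can construct measures satisfying this LDP for which $\mu_n\left(K_L^c\right)$ stays bounded away from zero (e.g.\ mass $\frac{1}{2}$ at points $x_n\to x_0$ with $x_n\neq x_0$, plus suitably spread exponentially small mass to satisfy the lower bounds on open sets away from $x_0$). Varadhan's conclusion survives in such examples only because the mass lurking just outside $K_L$ sits where $\phi\approx\phi\left(x_0\right)$, not where $\phi\approx M$---precisely the information your crude bound $\phi\leq M$ on all of $K_L^c$ discards.

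The fix is the standard reordering, as in Dembo and Zeitouni's proof: first cover $K_L$ by the finitely many open sets $G_{x_1},\ldots,G_{x_N}$ on which $\phi\leq\phi\left(x_i\right)+\delta$, and only then treat the leftover set $\left(\bigcup_i G_{x_i}\right)^c$. That set is closed and contained in $\left\{I>L\right\}$, so the LDP upper bound for closed sets legitimately yields decay at rate at least $L$ there, while the problematic mass near the boundary of $K_L$ is absorbed into the covering sets, where $\phi$ is controlled by $\phi\left(x_i\right)+\delta$ rather than by $M$. With this reordering, the rest of your argument goes through verbatim.
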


\subsection{Logarithmic Asymptotics of the Distribution of Critical Points}\label{sec:log_asymp}

Equipped with these mathematical tools, we prove our first result on the asymptotic behavior of $\mu_{\bm{C}\left(x\right)}$, which is present in the expectation of \eqref{eq:crt_pt_dist}.

\begin{lemma}[Asymptotic behavior of $\mu_{\bm{C}\left(x\right)}$]
    Define $G_x^\ast\left(z\right)$ as the implicit solution of the equation
    \begin{equation}
        8r^3\gamma^2 xG_x^\ast\left(z\right)^3-2r\gamma\left(z+2rx\right)G_x^\ast\left(z\right)^2+\left(z-2r\left(1-\gamma\right)\right)G_x^\ast\left(z\right)-1=0
        \label{eq:st_tr_imp}
    \end{equation}
    with the smallest imaginary part. Define
    \begin{equation}
        \dd{\mu_x^\ast}\equiv-\frac{1}{\cpi}\Im\left\{G_x^\ast\right\}\dd{\lambda}.
        \label{eq:st_invert}
    \end{equation}
    Let $p,m\to\infty$ as $\gamma=\frac{p}{2m}$ is held constant. Then, the empirical spectral measure $\mu_{\bm{C}\left(x\right)}$ satisfies a large deviation principle as $p\to\infty$ with speed $p^2$ with good rate function uniquely minimized at $\mu_x^\ast$ with a value of $0$.
    \label{lemma:as_mu}
\end{lemma}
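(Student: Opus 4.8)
The plan is to identify the candidate limiting measure $\mu_x^\ast$ by \emph{free probability} and then to upgrade the associated weak convergence to a full large deviation principle by treating $\bm{C}(x)$ as a deformed GOE matrix conditioned on its Wishart part. For the first step I would write $\bm{C}(x)=2r\cdot\tfrac{1}{2m}\bm{W}+\tfrac{r\sqrt{2mx}}{m}\bm{N}$, where $\tfrac{1}{2m}\bm{W}$ has limiting spectral law the Marchenko--Pastur distribution with ratio $\gamma$ and unit mean, and $\tfrac{1}{\sqrt{p}}\bm{N}$ has limiting law the semicircle distribution. Since $\bm{N}$ is orthogonally invariant and independent of $\bm{W}$, the two summands are \emph{asymptotically free}, so $\mu_{\bm{C}(x)}$ converges weakly almost surely to the free convolution of the rescaled Marchenko--Pastur and semicircle laws. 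The $R$-transform computation is then routine: the rescaled Marchenko--Pastur term contributes $\mathcal{R}_1(w)=2r/(1-2r\gamma w)$ and the rescaled semicircle contributes $\mathcal{R}_2(w)=4r^2\gamma x\,w$; substituting $w=G_x^\ast(z)$ into $\mathcal{R}_1(w)+\mathcal{R}_2(w)+\tfrac{1}{w}=z$ and clearing denominators reproduces exactly the cubic \eqref{eq:st_tr_imp}. A genuine Stieltjes transform obeys $\Im G_x^\ast(z)<0$ for $\Im z>0$, which selects the root of smallest imaginary part, and the inversion \eqref{eq:st_invert} then yields the density $\mu_x^\ast$; this identifies $\mu_x^\ast$ as the almost sure weak limit.

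For the large deviation principle I would condition on $\bm{W}$. Given $\bm{W}$, the matrix $\bm{C}(x)$ is a \emph{fixed} symmetric matrix plus an independent rescaled GOE---a deformed-GOE model---whose empirical spectral measure is known to satisfy an LDP at speed $p^2$ with a good rate function whose unique zero is the free convolution of the conditional law of $\tfrac{r}{m}\bm{W}$ with the semicircle law. I would then combine this conditional statement with the Hiai--Petz LDP for $\mu_{\bm{W}/2m}$ (also speed $p^2$, good rate function, unique zero the Marchenko--Pastur law) through a contraction/Laplace-principle argument, producing an unconditional LDP for $\mu_{\bm{C}(x)}$ at speed $p^2$ with a good rate function expressed as an infimal convolution over the intermediate Wishart spectrum. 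The two delicate ingredients are exponential tightness of $\mu_{\bm{C}(x)}$ in the weak topology, which I would obtain from Gaussian concentration together with trace/moment bounds, and control of the atypically large eigenvalues of $\bm{W}$: the deformed-GOE input presumes a frozen matrix of controlled spectrum, whereas the operator norm of $\bm{W}$ can itself deviate, and it is precisely the Wishart LDP that supplies the exponential penalty absorbing these events.

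The main obstacle is this combination step. Because $\bm{C}(x)$ is a sum of two \emph{independent} orthogonally invariant ensembles, its joint eigenvalue density involves an intractable orthogonal Harish-Chandra--Itzykson--Zuber integral, so one cannot simply write down a Coulomb-gas density and apply a direct Laplace method; the conditioning strategy circumvents this at the cost of carefully matching two speed-$p^2$ principles and uniformly controlling the conditioning matrix. For the final claim that the good rate function has a \emph{unique} zero of value $0$, I would invoke strict convexity of its logarithmic-energy (free-entropy) part, which forces uniqueness of the minimizer; that minimizer must then coincide with the almost sure limit $\mu_x^\ast$ identified above, at which the rate function necessarily vanishes. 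I note finally that the speed $p^2$, rather than $p$, is exactly what is needed downstream: the reweighting $\ce^{p\int\ln(\lvert\lambda-2rx\rvert)\dd{\mu}}$ appearing in \eqref{eq:crt_pt_dist} is of subdominant order $p\ll p^2$, so concentration of $\mu_{\bm{C}(x)}$ at $\mu_x^\ast$ controls it and the large deviations do not contribute to leading logarithmic order.
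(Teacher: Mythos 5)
Your proposal is correct and takes essentially the same route as the paper: you identify $\mu_x^\ast$ via $R$-transform addition for the asymptotically free rescaled Wishart and GOE parts (your two $R$-transforms and the resulting cubic match the paper's exactly), and you then obtain the speed-$p^2$ LDP by conditioning on $\bm{W}$, invoking the deformed-GOE LDP of Guionnet--Zeitouni, and combining it with the Hiai--Petz LDP for the Wishart spectral measure. The paper's proof is precisely this conditioning argument, with the rate function written as the same infimum over the intermediate Wishart spectral distribution, $I\left(\mu\right)=\inf_{\mu_{\bm{W}/2m}}\left(J_{\mu_{\bm{W}/2m}}\left(\mu\right)+K\left(\mu_{\bm{W}/2m}\right)\right)$; your additional care about exponential tightness and uniqueness of the minimizer only makes explicit what the paper treats as immediate.
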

\begin{proof}
    The empirical spectral measure of the random matrix $\bm{N}/\sqrt{p}$ satisfies a large deviation principle at a scale $p^2$, with good rate function minimized by Wigner's semicircle law~\citep{arous1997large}. Similarly, the empirical spectral measure of the random matrix $\bm{W}/2m$ satisfies a large deviation principle at a scale $p^2$, with good rate function minimized by the Marchenko--Pastur distribution~\citep{doi:10.1142/S021902579800034X}. As the $R$-transform of the empirical spectral distribution of $\bm{A}$ satisfies the scaling property
    \begin{equation}
        \mathcal{R}_{a\bm{A}}\left(z\right)=a\mathcal{R}_{\bm{A}}\left(az\right),
    \end{equation}
    the $R$-transform of the empirical spectral distribution of the weighted GOE term of $\bm{C}$ is of the form
    \begin{equation}
        \mathcal{R}_{\text{GOE}}\left(z\right)=4r^2\gamma xz
    \end{equation}
    and the $R$-transform of the weighted Wishart term is of the form
    \begin{equation}
        \mathcal{R}_{\text{Wishart}}\left(z\right)=\frac{2r}{1-2r\gamma z}.
    \end{equation}
    By the asymptotic freeness of independent GOE and Wishart matrices~\citep{10.5555/1204180}, $\mu_{\bm{C}\left(x\right)}$ converges weakly to the fixed measure $\mu^\ast$ with $R$-transform
    \begin{equation}
        \mathcal{R}_x\left(z\right)=\mathcal{R}_{\text{Wishart}}\left(z\right)+\mathcal{R}_{\text{GOE}}\left(z\right).
    \end{equation}
    Equation~\ref{eq:st_tr_imp} and \eqref{eq:st_invert} now follow from inverting the $R$-transform $\mathcal{R}_x$ via \eqref{eq:r_transform} and \eqref{eq:stieltjes_inversion}, respectively.

    % Eq. (1.4) of Lemma 1.10 in http://staff.utia.cas.cz/swart/lecture_notes/LDP4.pdf gives the strongest growth principle---seems in general to be folklore.
    % For an example of this kind of conditioning, see Remark 2.7 of https://projecteuclid.org/euclid.ejp/1580871680.
    We now consider large deviations in the weak convergence $\mu_{\bm{C}\left(x\right)}\leadsto\mu_x^\ast$. Conditioning on the empirical spectral distribution of $\bm{W}/2m$ and using the ``strongest growth wins'' principle~\citep{zeitouni1998large}, we have that $\mu_{\bm{C}\left(x\right)}$ satisfies a large deviation principle with speed $p^2$ with rate function given by
    \begin{equation}
        I\left(\mu\right)=\inf_{\mu_{\bm{W}/2m}}\left(J_{\mu_{\bm{W}/2m}}\left(\mu\right)+K\left(\mu_{\bm{W}/2m}\right)\right);
    \end{equation}
    here, $K$ is the rate function governing convergence of the empirical spectral distribution of the Wishart ensemble~\citep{doi:10.1142/S021902579800034X} and $J_{\mu_{\bm{W}/2m}}$ is the rate function governing convergence of the empirical spectral distribution of a fixed matrix with asymptotic eigenvalue distribution $\mu_{\bm{W}/2m}$ summed with a GOE matrix~\citep{GUIONNET2002461}. This sum is obviously uniquely minimized by $\mu=\mu^\ast$, when $I\left(\mu^\ast\right)=0$.
\end{proof}

Now, we examine the asymptotic behavior of the smallest eigenvalue $\lambda_1^{\bm{C}\left(x\right)}$ of $\bm{C}\left(x\right)$. Unlike the empirical spectral measure $\mu_{\bm{C}\left(x\right)}$ which satisfies a large deviation principle at a speed $p^2$, we will see that this eigenvalue satisfies a large deviation principle at a speed $p$, with deviations at this speed to the left of the asymptotic value $\lambda_{x,1}^\ast$.

\begin{lemma}[Asymptotic behavior of $\lambda_1^{\bm{C}\left(x\right)}$]
    Let $\lambda_{x,1}^\ast$ be the infimum of the support of $\mu_x^\ast$ as defined in \eqref{eq:st_invert}. Then, the smallest eigenvalue $\lambda_1^{\bm{C}\left(x\right)}$ of $\bm{C}\left(x\right)$ satisfies a large deviation principle with speed $p$ with good rate function that is infinite at $y>\lambda_{x,1}^\ast$ and is uniquely minimized at $y=\lambda_{x,1}^\ast$ with a value of $0$.
    \label{lemma:as_lambda}
\end{lemma}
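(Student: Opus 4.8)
The plan is to exploit the by-now-standard dichotomy in random matrix large deviations: whereas the bulk empirical measure $\mu_{\bm{C}\left(x\right)}$ deviates only at the expensive speed $p^2$ (Lemma~\ref{lemma:as_mu}), a single extremal eigenvalue can detach from the bulk at the much cheaper speed $p$. Since $p\ll p^2$, every event that forces the entire bulk to reorganize is invisible at speed $p$ and hence carries infinite rate. This is precisely the source of the claimed one-sidedness: pushing $\lambda_1^{\bm{C}\left(x\right)}$ strictly to the right of the limiting left edge $\lambda_{x,1}^\ast$ requires displacing all $p$ eigenvalues and is therefore a speed-$p^2$ event, contributing $+\infty$ to the speed-$p$ rate function. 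The zero of the rate function must then sit at $\lambda_{x,1}^\ast$, which is the almost-sure limit of the smallest eigenvalue inherited from $\mu_{\bm{C}\left(x\right)}\leadsto\mu_x^\ast$.

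Concretely, I would first condition on the spectrum of $\bm{W}$. By the Hiai--Petz large deviation principle for the Wishart ensemble at speed $p^2$~\cite{doi:10.1142/S021902579800034X}, the bulk of $\bm{W}/2m$ lies within any fixed neighborhood of the Marchenko--Pastur law outside an event of probability $\ce^{-\Theta\left(p^2\right)}$, which is free to condition on at speed $p$; the residual speed-$p$ freedom is carried by the \emph{smallest} Wishart eigenvalue, whose own LDP at speed $p$ is likewise classical. Conditionally, $\bm{C}\left(x\right)$ becomes a fixed matrix (with limiting spectrum the rescaled Marchenko--Pastur law, possibly with one outlier) perturbed additively by the independent GOE term $\frac{r\sqrt{2mx}}{m}\bm{N}$, i.e.\ precisely a Gaussian deformation of a deterministic matrix.

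The second step is to invoke the single-eigenvalue large deviation principle for such deformed Gaussian models, obtainable from the asymptotics of spherical (Harish-Chandra--Itzykson--Zuber) integrals~\cite{GUIONNET2002461}. This delivers an LDP for $\lambda_1^{\bm{C}\left(x\right)}$ at speed $p$ with a good rate function expressible through the Stieltjes transform $G_x^\ast$ of $\mu_x^\ast$ defined in Eq.~\eqref{eq:st_invert}: it vanishes at the edge $\lambda_{x,1}^\ast$, is strictly positive and finite for $y<\lambda_{x,1}^\ast$ (a lone eigenvalue escaping below the bulk), and is $+\infty$ for $y>\lambda_{x,1}^\ast$. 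Unconditioning then amounts to an infimal convolution over the two independent speed-$p$ mechanisms---a Wishart eigenvalue dropping below its edge, or the GOE perturbation detaching an eigenvalue---via the ``strongest growth wins'' principle~\cite{zeitouni1998large}; both preserve the qualitative features, so the combined rate function remains one-sided with a unique zero at $\lambda_{x,1}^\ast$.

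The main obstacle is that $\bm{C}\left(x\right)$ is a genuine sum of a Wishart and a GOE matrix and therefore is \emph{not} a classical $\beta$-ensemble with an explicit joint eigenvalue density, so the textbook Coulomb-gas argument for extremal eigenvalues cannot be applied verbatim. The work is thus concentrated in (i) justifying that the deformed-model edge LDP depends on the conditioned Wishart spectrum only through its limiting measure together with a possible single outlier, and (ii) establishing the infinite right-tail rigorously---namely that no speed-$p$ mechanism can push the smallest eigenvalue above the free-convolution edge $\lambda_{x,1}^\ast$, which is a statement about the rigidity of the left edge of $\mu_x^\ast$ rather than about any single degree of freedom.
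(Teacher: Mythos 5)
Your proposal is correct and follows essentially the same route as the paper's proof: both exploit the speed separation between bulk ($p^2$) and extremal-eigenvalue ($p$) deviations, condition on the spectral behavior of the Wishart and GOE components, invoke known large-deviation results for the edge of deformed/summed matrix models (the spherical-integral machinery of Guionnet et al.), and combine the mechanisms via the ``strongest growth wins'' infimal-convolution principle to obtain a rate function that is infinite for $y>\lambda_{x,1}^\ast$ and uniquely zero at the edge. The only cosmetic difference is that you condition on the full Wishart spectrum and then treat $\bm{C}\left(x\right)$ as a Gaussian deformation of a deterministic matrix, whereas the paper conditions symmetrically on the smallest eigenvalues of both $\bm{W}$ and $\bm{N}$ and cites the fixed-spectra sum result directly; these are interchangeable packagings of the same argument.
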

\begin{proof}
    The limiting smallest eigenvalues $\lambda_1^{\bm{W}},\lambda_1^{\bm{N}}$ of $\bm{W}/2m$ and $\bm{N}/\sqrt{p}$ both satisfy large deviation principles with speed $p$ that are infinite for $\lambda_1$ in the bulk of their respective limiting empirical spectral distributions~\citep{johansson2000shape,arous2001aging}. As in the proof of Lemma~\ref{lemma:as_mu}, we condition on large deviations of these eigenvalues~\citep{zeitouni1998large} and therefore have that the rate function governing $\lambda_1^{\bm{C}\left(x\right)}$ is
    \begin{equation}
        I\left(y\right)=\inf_{\lambda_1^{\bm{W}},\lambda_1^{\bm{N}}}\left(J_{\lambda_1^{\bm{W}},\lambda_1^{\bm{N}}}\left(y\right)+K\left(\lambda_1^{\bm{W}}\right)+L\left(\lambda_1^{\bm{N}}\right)\right);
        \label{eq:min_eig_rate_fn}
    \end{equation}
    here, $K$ is the rate function governing the convergence of $\lambda_1^{\bm{W}}$, $L$ that of $\lambda_1^{\bm{N}}$, and $J$ that of the smallest eigenvalue $\bm{C}\left(x\right)$ conditioned on the eigenvalue distributions of $\bm{W}$ and $\bm{N}$. Using known results on the large deviations of the smallest eigenvalue of the sum of two matrices with fixed eigenvalues (i.e. $J_{\lambda_1^{\bm{W}},\lambda_1^{\bm{N}}}$)~\citep{guionnet2020}, we see that $I\left(y\right)$ is infinite for $y>\lambda_{x,1}^\ast$ and is uniquely minimized at $y=\lambda_{x,1}^\ast$ with a value of $0$.
\end{proof}

Using Lemmas~\ref{lemma:as_mu} and~\ref{lemma:as_lambda}, we can prove the following logarithmic asymptotics on the expectation term in \eqref{eq:crt_pt_dist}. We will find that neither the large deviations in the convergence $\mu_{\bm{C}\left(x\right)}$ or $\lambda_1^{\bm{C}\left(x\right)}$ will contribute to leading order in the logarithmic asymptotics of $\operatorname{Crt}_k\left(E\right)$, as at a speed $p$ the only large deviations are $\lambda_1^{\bm{C}\left(x\right)}\leq\lambda_{x,1}^\ast$ which are dominated by $\lambda_1^{\bm{C}\left(x\right)}=\lambda_{x,1}^\ast$ in the expectation.

\begin{lemma}[Logarithmic asymptotics of the determinant]
    Let $\dd{\mu_E^\ast}$ be the spectral measure given in \eqref{eq:st_invert}, with $\lambda_{E,1}^\ast$ the infimum of its support. Let $p,m\gg 1$ with $\frac{p}{2m}=\gamma=\operatorname{O}\left(1\right)$. Then,
    \begin{equation}
        \begin{aligned}
            &\frac{1}{p}\ln\left(\mathbb{E}_{\bm{C}\left(E\right)}\left[\ce^{p\int\ln\left(\lvert\lambda-2rE\rvert\right)\dd{\mu_{\bm{C}\left(E\right)}}}\bm{1}\left\{\lambda_{k+1}^{\bm{C}\left(E\right)}\geq 2rE\right\}\right]\right)\\
            &=\int\ln\left(\bm{1}\left\{\lambda_{E,1}^\ast\geq 2rE\right\}\lvert\lambda-2rE\rvert\right)\dd{\mu_E^\ast}+\operatorname{o}\left(1\right).
            \label{eq:asymptotic_det}
        \end{aligned}
    \end{equation}
    \label{lemma:asymptotic_det}
\end{lemma}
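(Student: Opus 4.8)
The plan is to exploit the two different large-deviation speeds appearing in Eq.~\eqref{eq:asymptotic_det}: the amplifying exponential acts at speed $p$, while by Lemma~\ref{lemma:as_mu} the empirical spectral measure $\mu_{\bm{C}\left(E\right)}$ concentrates on $\mu_E^\ast$ at the much faster speed $p^2$. First I would record the elementary identity
\begin{equation}
\ce^{p\int\ln\left(\lvert\lambda-2rE\rvert\right)\dd{\mu_{\bm{C}\left(E\right)}}}=\prod_{i=1}^p\lvert\lambda_i^{\bm{C}\left(E\right)}-2rE\rvert=\left\lvert\det\left(\bm{C}\left(E\right)-2rE\bm{I_p}\right)\right\rvert,
\end{equation}
which identifies the quantity of interest as the logarithmic asymptotics of a masked determinant (hence the name of the lemma), and abbreviate the linear functional being exponentiated as $\Phi\left(\mu\right)\equiv\int\ln\left(\lvert\lambda-2rE\rvert\right)\dd{\mu}$. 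The heuristic is that any measure-level deviation changing $\Phi$ by $\operatorname{O}\left(1\right)$ costs $\ce^{-\Theta\left(p^2\right)}$ in probability while buying only $\ce^{\operatorname{O}\left(p\right)}$ from the exponential, so the bulk contribution collapses to $\Phi\left(\mu_E^\ast\right)$ and Lemma~\ref{lemma:as_lambda} is needed only to fix the value of the indicator. Concretely, a Laplace estimate of $\sup_\mu\left(p\Phi\left(\mu\right)-p^2 I\left(\mu\right)\right)$ places the optimizer within $\operatorname{O}\left(p^{-1}\right)$ of $\mu_E^\ast$, so that the shift in $p\Phi$ is $\operatorname{O}\left(1\right)$ and washes into the claimed $\operatorname{o}\left(1\right)$ after dividing by $p$.

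To make this rigorous for the bulk I would first regularize $\Phi$, since $\ln\lvert\lambda-2rE\rvert$ is unbounded both at $\lambda=2rE$ and at infinity. Introduce the bounded continuous truncation $\phi_M\left(\lambda\right)=\max\left(\min\left(\ln\left(\lvert\lambda-2rE\rvert\right),M\right),-M\right)$ and bound $\mathbb{E}\left[\ce^{p\Phi\left(\mu_{\bm{C}\left(E\right)}\right)}\right]$ from below by restricting to a weak neighborhood $\left\{d\left(\mu_{\bm{C}\left(E\right)},\mu_E^\ast\right)<\delta\right\}$ of probability $1-\operatorname{o}\left(1\right)$, on which weak convergence gives $\int\phi_M\dd{\mu_{\bm{C}\left(E\right)}}$ within $\epsilon\left(\delta\right)$ of $\int\phi_M\dd{\mu_E^\ast}$, and from above by splitting off the complement, whose probability $\ce^{-\Theta\left(p^2\right)}$ overwhelms the bounded gain $\ce^{pM}$. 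Letting $\delta\to 0$ and then $M\to\infty$ then yields $\frac{1}{p}\ln\mathbb{E}\left[\ce^{p\Phi\left(\mu_{\bm{C}\left(E\right)}\right)}\right]\to\Phi\left(\mu_E^\ast\right)$. The $M\to\infty$ limit is where the unboundedness is paid for: the growth at infinity is controlled by the super-exponential tail of the largest-eigenvalue law (equivalently the moment condition of Varadhan's lemma), and the singularity at $2rE$ is controlled because an eigenvalue approaching $2rE$ only drives $\Phi\to-\infty$, so it may be discarded in the upper bound and excluded on a typical event in the lower bound, while $\int\ln\lvert\lambda-2rE\rvert\dd{\mu_E^\ast}$ itself stays finite thanks to the integrable square-root edge of $\mu_E^\ast$ from Eq.~\eqref{eq:st_invert}.

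It remains to reinstate the indicator $\bm{1}\left\{\lambda_{k+1}^{\bm{C}\left(E\right)}\geq 2rE\right\}$, for which I would apply Lemma~\ref{lemma:as_lambda}, extended from $\lambda_1^{\bm{C}\left(E\right)}$ to the bottom $k+1$ eigenvalues, which for fixed $k$ all pile up at the left edge $\lambda_{E,1}^\ast$. If $\lambda_{E,1}^\ast\geq 2rE$, the indicator holds with probability $1-\operatorname{o}\left(1\right)$; the configurations that violate it require a bottom eigenvalue to deviate leftward past $2rE$, an event of speed-$p$ probability on which the determinant factor is simultaneously driven toward smaller values, so such configurations cannot raise the leading order and the indicator contributes a factor $1+\operatorname{o}\left(1\right)$. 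If instead $\lambda_{E,1}^\ast<2rE$, then enforcing $\lambda_{k+1}^{\bm{C}\left(E\right)}\geq 2rE>\lambda_{E,1}^\ast$ forces the edge eigenvalues to the right of the bulk, which by Lemma~\ref{lemma:as_lambda} carries an infinite rate function---probability decaying faster than $\ce^{-cp}$ for every $c$---so that $\frac{1}{p}\ln\mathbb{E}\left[\cdots\right]\to-\infty$, matching $\ln\left(\bm{1}\left\{\lambda_{E,1}^\ast\geq 2rE\right\}\lvert\lambda-2rE\rvert\right)=-\infty$ on the right-hand side. The two cases assemble into the single expression with the indicator inside the logarithm.

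The main obstacle I anticipate is the transition regime where $2rE$ sits essentially at the edge $\lambda_{E,1}^\ast$, where the log singularity of $\Phi$ and the speed-$p$ edge deviations interact and the two case analyses above must be glued together. Here I must make the truncation limit $M\to\infty$ and the neighborhood estimates uniform over a window of energies straddling the crossing $2rE=\lambda_{E,1}^\ast$, so that the claimed $\operatorname{o}\left(1\right)$ is genuine rather than diverging at the transition. The structural fact that rescues the argument---and that distinguishes this Wishart setting from the Gaussian case of~\cite{doi:10.1002/cpa.21422}, where analogous edge deviations do survive---is precisely the separation of speeds, $p^2$ for the bulk measure against $p$ for the edge, which guarantees that the edge large deviations are always dominated by the free-probability prediction $\mu_E^\ast$.
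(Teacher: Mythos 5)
Your proposal is correct and takes essentially the same route as the paper: both arguments rest on the separation of large-deviation speeds, with the speed-$p^2$ LDP for $\mu_{\bm{C}\left(E\right)}$ (Lemma~\ref{lemma:as_mu}) freezing the bulk contribution at $\mu_E^\ast$ and the speed-$p$ edge LDP (Lemma~\ref{lemma:as_lambda}) resolving the indicator, combined through a Varadhan/Laplace argument. The paper compresses this into a single invocation of Varadhan's lemma (using the bounded indicator and the $p^2$ bulk speed to justify the tail condition) followed by a supremum over the location of the smallest eigenvalue, whereas you unpack the same content into an explicit truncation-plus-weak-neighborhood Laplace estimate and a two-case analysis of the indicator (together with the extension to $\lambda_{k+1}^{\bm{C}\left(E\right)}$ that the paper leaves implicit) --- a difference of presentation and rigor at the technical edges, not of proof strategy.
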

\begin{proof}
    As $\mu_{\bm{C}\left(E\right)}$ satisfies a large deviation principle with speed $p^2$ with rate function minimized at $\mu_E^\ast$ by Lemma~\ref{lemma:as_mu}, and as $\bm{1}\left\{\lambda_{1}^{\bm{C}\left(E\right)}\geq 2rE\right\}\leq 1$, we have that the tail condition of Varadhan's lemma at speed $p$ is satisfied~\citep{zeitouni1998large} and therefore
    \begin{equation}
        \begin{aligned}
            \lim\limits_{p\to\infty}&\frac{1}{p}\ln\left(\mathbb{E}_{\bm{C}\left(E\right)}\left[\ce^{p\int\ln\left(\lvert\lambda-2rE\rvert\right)\dd{\mu_{\bm{C}\left(E\right)}}}\bm{1}\left\{\lambda_1^{\bm{C}\left(E\right)}\geq 2rE\right\}\right]\right)\\
            &=\sup\limits_{\lambda\in\mathbb{R}}\left(\int\ln\left(\bm{1}\left\{\lambda\geq 2rE\right\}\lvert\lambda-2rE\rvert\right)\dd{\mu_E^\ast}-I\left(\lambda\right)\right).
        \end{aligned}
    \end{equation}
    Here, $I$ is as in \eqref{eq:min_eig_rate_fn}. The supremum over $\lambda$ is obviously achieved when $\lambda=\lambda_{E,1}^\ast$ by the properties of $I$ discussed in Lemma~\ref{lemma:as_lambda}, giving the leading order term in \eqref{eq:asymptotic_det}. The result being exact in the $p\to\infty$ limit gives the subleading $\operatorname{o}\left(1\right)$.
\end{proof}

Using Lemma~\ref{lemma:asymptotic_det}, we can therefore finally calculate the logarithmic asymptotic distribution of local minima of a WHRF.
\begin{theorem}[Logarithmic asymptotics of the local minima distribution]
    Let $\dd{\mu_E^\ast}$ be the spectral measure given in \eqref{eq:st_invert}, with $\lambda_{E,1}^\ast$ the infimum of its support. Let $p,m\gg 1$ with $\frac{p}{2m}=\gamma=\operatorname{O}\left(1\right)$. Then, the expected distribution of local minima of $F_{\text{WHRF}}$ at a fixed energy $E>0$ is given by
    \begin{equation}
        \begin{aligned}
            \frac{1}{p}\ln\left(\mathbb{E}\left[\operatorname{Crt}_0\left(E\right)\right]\right)&=\frac{1}{2}\ln\left(\frac{\cpi q}{2\gamma}\right)+\frac{1}{2\gamma}\left(1-E\right)+\frac{1}{2}\left(\gamma^{-1}-1\right)\ln\left(E\right)\\
            &+\int\ln\left(\left\lvert\frac{\lambda}{r}-2E\right\rvert\bm{1}\left\{\frac{\lambda_{E,1}^\ast}{r}\geq 2E\right\}\right)\dd{\mu_E^\ast}+\operatorname{o}\left(1\right).
        \end{aligned}
    \end{equation}
    \label{thm:asymptotic_loc_min_dist_app}
\end{theorem}
\begin{proof}
    The result follows directly from applying Lemma~\ref{lemma:asymptotic_det} to Theorem~\ref{thm:exact_crt_dist_app}.
\end{proof}

Note that, though we only prove the asymptotic distribution of local minima in Theorem~\ref{thm:asymptotic_loc_min_dist_app}, we expect similar theorems to also hold for critical points of constant index $k$ (taking $\lambda_{E,1}^\ast\mapsto\lambda_{E,k}^\ast$ in the integrand). The only difference in the derivation is the exact form of the large deviations of the $k$th smallest eigenvalue of $\bm{C}\left(x\right)$. This is similar to the case in Gaussian hyperspherical random fields~\citep{doi:10.1002/cpa.21422}.

\section{Details of the Numerical Simulations}\label{sec:det_num_sims}

We now give further details on the numerical simulations performed in Sec.~\ref{sec:numerics}. We performed all simulations via Qiskit~\citep{Qiskit}, and used standard gradient descent (via the method of finite differences) to optimize the VQA loss function
\begin{equation}
    F\left(\bm{\theta}\right)=\bra{\bm{\theta}}H_{\bm{T},\bm{U}}\ket{\bm{\theta}}
\end{equation}
until convergence. $H_{T,U}$ is the 1D $n$ site \emph{spinless Fermi--Hubbard Hamiltonian}~\citep{negele2018quantum}
\begin{equation}
    H_{\bm{T},\bm{U}}=-\sum\limits_{i=1}^{n-1}T_i\left(c_i^\dagger c_{i+1}+c_{i+1}^\dagger c_i\right)+\sum\limits_{i=1}^{n-1}U_i c_i^\dagger c_i c_{i+1}^\dagger c_{i+1},
    \label{eq:spinless_fermi_hubbard}
\end{equation}
where $c$ is the fermionic annihilation operator. $T_i$ and $U_i$ are i.i.d. normally distributed in order to break translational invariance. In our simulations, these random variables were centered at $T=1$ and $U=2$, respectively, and each had a variance of $10^{-2}$.

Our implementation of gradient descent used a learning rate of $0.05$ and a momentum of $0.9$, and halted when either the function value improved by no more than $10^{-5}$ or after $10^6$ iterations, whichever came first. We initialized each instance at a uniformly random point in parameter space, with each parameter initialized within $\left[-2\cpi,2\cpi\right]$.

To estimate the empirical distribution of local minima for the studied instances of the varitional quantum eigensolver (VQE)~\citep{peruzzo2014variational}, we repeated this procedure $52$ times, using a new ansatz and uniformly random starting point for each training instance. We also verified numerically that $m$ as defined in \eqref{eq:m_def} is at least on the order of $2^n$ for $H_{1,2}$, though we directly used \eqref{eq:m_def} when computing $\gamma$. In all plotted instances, we normalize the energy scale by a factor of $c_{\text{VQA}}$, where
\begin{equation}
    c_{\text{VQA}}=\overline{\lambda}-\lambda_1;
\end{equation}
this is just the overall factor of \eqref{eq:ham_loss_intro}. These units are such that the mean eigenvalue of $H-\lambda_1$ in the subspace of interest is at $E=1$.

In Sec.~\ref{sec:hva_numerics}, we tested our analytic results against a Hamiltonian informed ansatz. Specifically, we used the \emph{Hamiltonian variational ansatz} (HVA)~\citep{PhysRevA.92.042303}. For the Fermi--Hubbard Hamiltonian of \eqref{eq:spinless_fermi_hubbard}, each HVA layer is of the form
\begin{equation}
    U_i^{\bm{T},\bm{U}}\left(\bm{\theta_i}\right)=\ce^{-\ci\theta{i,2}H_{T,U,\text{odd}}}\ce^{-\ci\theta{i,2}H_{T,U,\text{even}}}\ce^{-\ci\theta_{i,1}H_{T,U,\text{Coulomb}}}.
\end{equation}
Here, $H_{\bm{T},\bm{U},\text{Coulomb}}$ is composed of the terms proportional to $U_i$ in $H_{\bm{T},\bm{U}}$, $H_{\bm{T},\bm{U},\text{even}}$ the hopping terms on even links, and $H_{\bm{T},\bm{U},\text{odd}}$ the hopping terms on odd links. We took the starting state $\ket{\psi_0}$ to be the computational basis state $\ket{1}$ on the first $\frac{n}{2}$ qubits and $\ket{0}$ on the other $\frac{n}{2}$ qubits. To observe the effects of scaling the number of independent parameters $p$, we overparameterize our ansatz at a fixed overall depth by fixing the total number of ansatz layers $U_i^{\bm{T},\bm{U}}$ to be $6$, but introduce extra parameters to govern each evolution. For instance, for a multiplicative factor $f=2$, we double the number of parameters by splitting into a sum of two terms
\begin{align}
    H_{\bm{T},\bm{U},\text{Coulomb}}&=H_{\bm{T},\bm{U},\text{Coulomb}}^{\left(1\right)}+H_{\bm{T},\bm{U},\text{Coulomb}}^{\left(2\right)},\\
    H_{\bm{T},\bm{U},\text{even}}&=H_{\bm{T},\bm{U},\text{even}}^{\left(1\right)}+H_{\bm{T},\bm{U},\text{even}}^{\left(2\right)},\\
    H_{\bm{T},\bm{U},\text{odd}}&=H_{\bm{T},\bm{U},\text{odd}}^{\left(1\right)}+H_{\bm{T},\bm{U},\text{odd}}^{\left(2\right)},
\end{align}
and parameterize the evolution under each term separately. For $f=1$, this ansatz preserves the fermion number of the initial state; thus, for these simulations we calculate $m$ in this $\frac{n}{2}$-fermion subspace. For large $f$, this parameterization breaks the fermion number conservation of the ansatz, but still preserves the parity of the fermion number. In practice, then, the $\gamma$ we compute should be considered an upper bound on the true $\gamma$, strengthening our empirical results.

\end{document}